\newtheorem{theorem}{Theorem}[section]
\newtheorem{lemma}[theorem]{Lemma}
\newtheorem{proposition}[theorem]{Proposition}
\newtheorem{corollary}[theorem]{Corollary}
\theoremstyle{definition}
\newtheorem{definition}{Definition}
\newtheorem{remark}[theorem]{Remark}
\numberwithin{equation}{section}
\newcommand{\bea}{\begin{eqnarray}}
\newcommand{\eea}{\end{eqnarray}}
\newcommand{\<}{\langle}
\renewcommand{\>}{\rangle}
\newcommand{\wt}{\widetilde}
\def\Vol{{\rm Vol}}
\def\poly{{\rm poly}}
\def\eps{{\varepsilon}}
\def\supp{{\rm supp}}
\def\cF{{\mathcal F}}
\def\cG{{\mathcal G}}
\def\<{\langle}
\def\>{\rangle}
\def\cN{{\cal N}}
\def\cU{{\cal U}}
\def\cE{{\mathcal E}}
\def\b0{{\boldsymbol{0}}}
\def\Ber{{\sf Ber}}
\def\Var{{\rm Var}}
\def\cA{{\mathcal A}}
\def\cK{{\mathcal K}}
\def\olambda{\overline{\lambda}}
\def\ulambda{\underline{\lambda}}
\renewcommand{\b}{\mathbf{b}}
\def\lt{\left}
\def\rt{\right}
\def\eps{\varepsilon}
\def\bbA{{\mathbb{A}}}
\def\bbE{{\mathbb{E}}}
\def\bbP{{\mathbb{P}}}
\def\bbR{{\mathbb{R}}}
\def\bbS{{\mathbb{S}}}
\def\cF{{\mathcal{F}}}
\def\cK{{\mathcal{K}}}
\def\cN{{\mathcal{N}}}
\def\cP{{\mathcal{P}}}
\def\width{{\mathrm{width}}}
\newcommand{\term}[1]{\text{\tt{#1}}\xspace}
\newcommand{\MainALG}{\term{Semibandit BIC Exploration}}
\newcommand{\ZEROS}{\term{ZEROS}}
\newcommand{\ExplorePhase}[1][j]{\textsc{Exploration Phase} for arm $#1$}
\newcommand{\ExploitPhase}[1][N]{\textsc{Exploitation Phase} with depth $#1$\xspace}
\newcommand{\PaddedPhase}{\textsc{Padded Phase}\xspace}
\newcommand{\padding}{\lambda}
\newcommand{\padG}{G_{\term{pad}}}
\title{Incentivizing Exploration with Linear Contexts \\ and Combinatorial Actions}
\author{Mark Sellke
}
\date{}
\begin{document}

\maketitle

\begin{abstract}
\noindent We advance the study of incentivized bandit exploration, in which arm choices are viewed as recommendations and are required to be Bayesian incentive compatible. Recently \cite{sellke-slivkins} showed under certain independence assumptions that after collecting enough initial samples, the popular Thompson sampling algorithm becomes incentive compatible. We give an analog of this result for linear bandits, where the independence of the prior is replaced by a natural convexity condition. This opens up the possibility of efficient and regret-optimal incentivized exploration in high-dimensional action spaces. In the semibandit model, we also improve the sample complexity for the pre-Thompson sampling phase of initial data collection.
\end{abstract}

\section{Introduction}

\subsection{Problem Formulation and Results}

We consider \textbf{incentivized} exploration as introduced in \cite{kremer2014implementing,ICexploration-ec15}. These are bandit problems motivated by the scenario that the $t$-th action is a \textbf{recommendation} made to the $t$-th customer. A planner, representing a central service such as a major website, observes outcomes of past customers. By contrast, each customer sees only the recommendation they are given and visits the planner only once. Assuming a common Bayesian prior (and indistinguishability of customers), we would like our recommendations to be trustworthy, in the sense that a rational customer will follow our recommendations. We have the following definition:

\begin{definition}
\label{def:BIC}
Let $\cA$ be an action set and $\mu$ a prior distribution over the mean reward function $\ell^*:\cA\to \bbR$. A bandit algorithm which recommends a sequence $(A^{(1)},A^{(2)},\dots,A^{(T)})$ of actions in $\cA$ is said to be \textbf{Bayesian incentive compatible} (BIC) if for each $t\in [T]$:
\begin{equation}
\label{eq:BIC-def}
\begin{aligned}
    &\bbE[\mu(A)-\mu(A')~|~A^{(t)}=A]\geq 0,
    \\
    &\quad\forall A,A'\in \cA\text{ such that}~\bbP[A^{(t)}=A]>0.
\end{aligned}
\end{equation}
We will often refer to the algorithm as the \emph{planner}, and $A^{(t)}$ as a \emph{recommendation} made to an \emph{agent}. Here the agent knows the (source code for the) planner's algorithm, and the value of $t$ (i.e. his place in line) but nothing about the online feedback. (In particular, each agent appears just once and never ``comes back''.) Thus, \eqref{eq:BIC-def} exactly states that rational agents will follow the planner's recommendation.
\end{definition}

This model was formulated in \cite{kremer2014implementing}, and is a multi-stage generalization of the well-studied Bayesian persuasion problem in information design \cite{bergemann2019information,kamenica2019bayesian}. Fundamental results on feasible explorability and vanishing regret for BIC algorithms were obtained in \cite{ICexploration-ec15,ICexplorationGames-ec16}.
In all cases, the fundamental principle is to use \emph{information asymmetry} to guide agent decisions toward exploration.

Let us remark that averaging over the choice of $A^{(t)}$ shows that for any BIC algorthm $\cA$ and fixed time $t$, 
\[
    \bbE[\mu(A^{(t)})]\geq \sup_{A\in\cA} \bbE[\mu(A)].
\]
That is, the recommended time $t$ action $A^{(t)}$  is always better on average than exploiting according to the prior. Hence as a special case, BIC algorithms are guaranteed to benefit \textbf{all} users as compared to naive exploitation without learning. Moreover an agent's knowledge of the exact time $t$ only makes the BIC guarantee stronger.

As mentioned, fundamental results on BIC bandit algorithms were obtained in \cite{ICexploration-ec15,ICexplorationGames-ec16}. 
\msedit{For instance, the former work gave ``hidden exploration'' algorithms which exploit with such high frequency that performing \emph{any} bandit algorithm on the remaining time-steps is BIC (given suitable assumptions on the prior).}
Unfortunately, all algorithms in these works had to pay exponentially large multiplicative factors in their regret compared to non-BIC bandit algorithms. This ``price of incentives'' was studied quantitatively in our recent work \cite{sellke-slivkins} with Slivkins for the case of independent arms. We proved the classical Thompson sampling algorithm \cite{Thompson-1933} is BIC without modification once a mild number of initial samples per arm have been collected, leading to a natural two-phase algorithm. The first phase collects these initial samples in a BIC way, while the second simply performs Thompson sampling. Note that Thompson sampling obeys many state-of-the-art regret bounds, both Bayesian \cite{Russo-MathOR-14,bubeck2013prior,TSfirstorder} and frequentist \cite{Shipra-colt12,Kaufmann-alt12,Shipra-aistats13,TorTS-nips19,lattimore2021mirror}. Thus the result of \cite{sellke-slivkins} implies that the additional regret forced by the BIC requirement is essentially additive, and at most the number of rounds needed to collect the needed initial samples in a BIC way (a quantity studied separately therein).

The results of \cite{sellke-slivkins} are limited to the case of independent arms, and it is therefore of interest to expand the range of models under which Thompson sampling leads to provable incentive compatibility. This is the goal of the present paper. Our main focus is on the \textbf{linear} bandit in dimension $d$. Here the action set $\cA\subseteq\bbR^d$ may be infinite, and the reward function $\ell^*:\cA\to\bbR$ is always linear. This is a natural next step and allows for richer correlations between actions.
\msedit{From the practical viewpoint, BIC guarantees are relevant for recommending restaurants, movies, hotels, and doctors (see e.g. \cite{slivkins-MABbook}, Chapter 11); in all of these settings it is desirable to leverage contextual information to obtain sample complexity scaling with the ambient data dimension rather than the potentially huge number of total actions.}
We make no assumptions of independence but instead require natural geometric conditions, e.g. that the prior $\mu$ for $\ell^*$ is uniformly random on a convex body $\cK$ with bounded aspect ratio. This covers a fairly broad range of scenarios, including centered Gaussian $\mu$ by homogeneity. 
\msedit{Indeed we consider the connection with such conditions as a significant contribution of this work; in \cite{sellke-slivkins,hu2022incentivizing} the FKG inequality is used crucially throughout but requires independence properties that are unavailable with linear contexts.}
As a regret benchmark, recall that for the Bayesian regret of Thompson sampling is known to be $O(d\sqrt{T\log T})$ by \cite[Theorem 2]{dong2018information} (see also \cite{Russo-MathOR-14}).
Interestingly, the \emph{frequentist} regret of Thompson sampling for linear bandits is known to be larger by a factor of $\sqrt{d}$ \cite{Shipra-icml13,hamidi2020worst}.

Relative to this near-optimal guarantee, our first main result Theorem~\ref{thm:main-linear-bandit} shows that the price of incentives is again additive rather than multiplicative. Namely, Thompson sampling is again BIC after obtaining $\poly(d)$ initial samples which are well-spread in a natural spectral sense. Technically, we require action sets be to $\eps$-separated so that the recommendation of a given $A\in\cA$ has non-negligible probability; this is a mild condition since any action set can be discretized before running Thompson sampling. Further as shown in Theorem~\ref{thm:main-GLM}, the result extends to the \textbf{generalized} linear bandit when the link function's derivative is bounded above and below.

Next, we provide two counterexamples. The former shows that Thompson sampling may be BIC at time $1$ but not time $2$ -- this may be surprising as it is impossible for the multi-armed bandit with independent arms (see \cite[Lemma 4.9]{sellke-slivkins}). The latter gives a natural example in which initial data collection provably needs $e^{\Omega(d)}$ samples.
\msedit{In particular, the latter counterexample illustrates that further geometric conditions on the prior and/or action set are necessary for $\poly(d)$ sample complexity of BIC initial exploration. We leave a further study of this interesting problem for future work, but remark that exogenous payments could be used in practice to obtain the required $\poly(d)$ initial samples (see some of the references in Subsection~\ref{subsec:other}).}

Finally, we give new results for incentivized exploration of the combinatorial semibandit. Here actions consist of subsets $A\subseteq [d]$ of at most $d$ independent atoms which each give separate feedback as well as rewards. This is another natural testing ground for correlated rewards and was the recent focus of \cite{hu2022incentivizing}; they showed that Thompson sampling is still BIC with enough initial samples. \cite{hu2022incentivizing} gave initial exploration algorithms to bound the additive regret increase from being BIC, but with exponentially large sample complexity in typical cases. We improve this latter aspect of their work by extending the framework of \cite{sellke-slivkins}, linking the initial sample complexity to the minimax value of a two-player zero-sum game.

\msedit{
\subsection{Other Relevant Work}
\label{subsec:other}

Here we mention a few other related works in the broad area of incentivized exploration. 
\cite{frazier2014incentivizing} studies a similar problem, again in a Bayesian setting. However in their work the full history is made public so there is no information asymmetry. Instead, incentivization is achieved via exogenous payments.
\cite{wang2018multi,agrawal2020incentivising,wang2021incentivizing} study incentivized exploration in a non-Bayesian setting, where empirical averages are used instead of posterior means, and again incentives are realized through payments. The latter two works also focus on linear contexts and show $\wt O(\sqrt{T})$ total payment suffices for incentivization.
By contrast Theorem~\ref{thm:main-linear-bandit} implies that Thompson sampling is BIC after a constant (i.e. $T$-independent) amount of initial exploration. If payments can be used for incentivization, this implies in particular that a constant amount of total payment suffices for incentivized exploration. However as just mentioned, our setting and assumptions differ from the aforementioned works in multiple ways.

\cite{kannan2017fairness} studied the power of exogenous payments to incentivize related notions of fairness in which better actions ought to be played with higher probability. Their model includes information asymmetry in the opposite direction; agents observe the full history while the planner might not. 
\cite{immorlica2020incentivizing} proposed incentivization via \emph{selective data disclosure} where agents observe a carefully chosen subset of the history. However their setting is not precisely Bayesian as they assume agents perform naive empirical averaging over the chosen subset rather than taking the planner's algorithm into account. In fact it can be shown by a form of the revelation princple (see \cite{slivkins-MABbook}, Chapter 11) that in the perfectly Bayesian setting, general planner-to-agent signals have no more power than simple action recommendations.
Finally our work and the ones mentioned above assume agents are identical; a model with heterogenous agents was studied in \cite{immorlica2019bayesian}.
}

\section{Preliminaries}

This paper considers two bandit models with correlated rewards. In both cases the unknown optimal action is denoted $A^*$ (with some arbitrary fixed tie-breaking rule). The first of these is the linear bandit, where we assume the action set $\cA\subseteq B_1(0)\subseteq \bbR^d$ is contained inside the unit ball. Moreover, the reward $r_t$ is given by
\[
    r_t=\langle \ell^*,A^{(t)}\rangle + z_t
\]
where $z_t$ is conditionally mean zero and $O(1)$-subgaussian conditioned on $(\ell^*,A^{(t)})$. We assume $\ell^*$ is such that 
\[
\bbE[r]=\langle \ell^*,A\rangle\in [-1,1]
\]
for all $A\in\cA$.
This includes $r_t=\langle \ell_t,A^{(t)}\rangle + \tilde z_t$ with $\bbE[\ell_t]=\ell^*$, as well as binary rewards $r_t\in \{-1,1\}$. 
We denote $\theta_i=\langle \ell^*,A_i\rangle$ the expected reward for $A_i\in\cA=(A_1,\dots,A_{|\cA|})$.

The second model we consider is the combinatorial semibandit. Here the action set $\cA\subseteq 2^{[d]}$ is a family of subsets; we call $A\in\cA$ an \emph{action} and $a_i\in [d]$ an \emph{atom}. After playing $A^{(t)}\in\cA$, the player receives a vector of reward feedback $(r_{a})_{a\in A^{(t)}}\in \{0,1\}^{|A^{(t)}|}$ and gains their entrywise sum as a total reward. Each atom $a_i$ gives reward independently with probability $\theta_i$, and following \cite{hu2022incentivizing} we assume that the $\theta_i$ are jointly independent under $\mu$. We let $\theta_A=\sum_{i\in A}\theta_i$ for each $A\in\cA$. We let $n_t(i)$ denote the number of times atom $a_i$ has been sampled prior to time $t$, and let $\hat p_n(i)$ be the empirical average reward of arm $i$ from its first $n>0$ samples. For each $j\in [d]$ we set $\cA_j$ be the subset of $\cA$ consisting of $A$ containing $a_j$, and $\cA_{-j}=\cA\backslash \cA_{j}$.

Thompson sampling is a Bayesian bandit algorithm, defined from an initial prior $\mu$ over $\ell^*$. Let $\mathcal F_t$ denote the observed history strictly before time $t$ and set $\bbE^t[\cdot]=\bbE[\cdot|\mathcal F_t]$, $\bbP^t[\cdot]=\bbP[\cdot|\mathcal F_t]$. Thompson sampling at time $t$ draws an arm independently at random from the time-$t$ distribution of $A^*$, so that 
\[
    \bbP^t[A^{(t)}=A]=\bbP^t[A^*=A].
\]
Given a fixed prior $\mu$, we say a bandit algorithm is BIC if it satisfies \eqref{eq:BIC-def}. More leniently, we say it is $\eps$-BIC if for each $t\in [T]$:
\begin{equation}
\label{eq:eps-BIC}
\begin{aligned}
    &\bbE[\mu(A)-\mu(A')~|~A^{(t)}=A]\geq -\eps,
    \\
    &\quad\forall A,A'\in \cA\text{ such that}~\bbP[A^{(t)}=A]>0.
\end{aligned}
\end{equation}
We have mentioned regret guarantees for Thompson sampling with linear contexts. For the combinatorial semibandit, \cite{TSfirstorder} shows among other things that Thompson sampling attains the optimal $\wt O(d\sqrt{T})$ regret.

We say a mean-zero scalar random variable $X$ is $C$-subgaussian if for all $t\in\bbR$,
\begin{equation}
\label{eq:subgaussian}
    \bbE[e^{tX}]\leq e^{C^2t^2/2}.
\end{equation}
The smallest $C$ such that \eqref{eq:subgaussian} holds is the subgaussian norm of $X$. We say a mean-zero random vector $\vec X\in\bbR^d$ is $C$-subgaussian if $\langle \vec X,\vec v\rangle$ is $C$-subgaussian for all $\vec v\in\bbR^d$ of norm $\|\vec v\|\leq 1$.

We use $\preceq$ to denote the positive semidefinite partial order on symmetric matrices, i.e. $M_1\preceq M_2$ if and only if $M_2-M_1$ is positive semidefinite.

\subsection{Bayesian Chernoff Bounds}

We use the following posterior contraction lemma from \cite{sellke-slivkins}. When $\theta$ is taken to be an empirical average, it yields a Bayesian version of the classical Chernoff bound. The statement below hides some technical specifications for readability, but all quantities lie in $\mathbb R^m$ for some $m$, all functions are Borel measurable, and all probability measures are defined on the Borel sigma algebra.

\begin{lemma}
\label{lem:freqguarantee}
Let $\xi\in\bbR^d$ be an unknown parameter and $\gamma$ an observed signal with distribution depending on $\xi$. Suppose there exists an estimator $\theta=\theta(\gamma)\in\bbR^d$ for $\xi$ depending only on this signal, which satisfies for some deterministic $\eps,\delta>0$ the concentration inequality
\begin{equation}
\label{eq:chernoff-hypothesis}
    \bbP\big[ \|\theta-\xi\|\geq \eps~|~\xi \big]\leq\delta\quad \forall \xi.
\end{equation}

Further, let $\xi\sim \mu$ be generated according to a prior distribution $\mu$, and let $\hat \xi$ be a sample from the posterior distribution $\hat \mu=\hat \mu(\gamma)$ for $\xi$ conditioned on the observation $\gamma$. Then
\begin{equation}
\label{eq:chernoff-conclusion}
    \bbE^{\xi\sim \mu}
    \big[
    \bbP^{\hat\xi\sim \hat \mu}
    \big[ \|\hat\xi-\xi\|\geq 2\eps
    \big]
    \big]
    \leq 2\delta.
\end{equation}
\end{lemma}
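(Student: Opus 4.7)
The plan is to reduce the Bayesian statement to two applications of the frequentist hypothesis \eqref{eq:chernoff-hypothesis}, linked by the triangle inequality. The key structural observation is that, in the joint probability space on $(\xi, \gamma, \hat\xi)$ where $\xi\sim\mu$, $\gamma$ is drawn from the model conditional on $\xi$, and $\hat\xi$ is drawn from $\hat\mu(\gamma)$ independently of $\xi$ given $\gamma$, the random variables $\xi$ and $\hat\xi$ have \emph{identical} conditional distributions given $\gamma$. Indeed, the definition of the posterior is precisely that $\Law(\xi \mid \gamma) = \hat\mu(\gamma)$, and by construction $\Law(\hat\xi \mid \gamma) = \hat\mu(\gamma)$ as well. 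Hence for any Borel set $B$ depending on $\gamma$, the events $\{\xi \in B\}$ and $\{\hat\xi \in B\}$ have the same probability.

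First I would integrate the hypothesis \eqref{eq:chernoff-hypothesis} over the prior to obtain
\[
\bbP\bigl[\|\theta(\gamma)-\xi\| \geq \eps\bigr]
= \bbE^{\xi\sim\mu}\bigl[\bbP[\|\theta(\gamma)-\xi\| \geq \eps \mid \xi]\bigr] \leq \delta.
\]
Then, applying the observation above with $B = \{x \in \bbR^d : \|\theta(\gamma) - x\| \geq \eps\}$ (which is a Borel set measurable with respect to $\gamma$), I deduce the twin bound
\[
\bbP\bigl[\|\theta(\gamma)-\hat\xi\| \geq \eps\bigr] = \bbP\bigl[\|\theta(\gamma)-\xi\| \geq \eps\bigr] \leq \delta.
\]

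Next I would combine these two estimates via the triangle inequality. On the event $\{\|\hat\xi - \xi\| \geq 2\eps\}$, at least one of the inequalities $\|\theta(\gamma)-\xi\| \geq \eps$ or $\|\theta(\gamma)-\hat\xi\| \geq \eps$ must hold. A union bound therefore yields
\[
\bbP\bigl[\|\hat\xi - \xi\| \geq 2\eps\bigr]
\leq \bbP\bigl[\|\theta(\gamma)-\xi\| \geq \eps\bigr]
+ \bbP\bigl[\|\theta(\gamma)-\hat\xi\| \geq \eps\bigr]
\leq 2\delta.
\]
Finally I would rewrite the left-hand side in the iterated form of \eqref{eq:chernoff-conclusion} by conditioning on $\xi$ first (or, equivalently, by Fubini applied to the indicator $\one_{\|\hat\xi - \xi\| \geq 2\eps}$), which gives exactly the desired conclusion.

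There is no serious obstacle here; the only subtle point is the distributional identity $\Law(\xi \mid \gamma) = \Law(\hat\xi \mid \gamma)$, which is the defining property of the posterior but must be invoked explicitly since the two variables are \emph{not} almost surely equal. The measurability caveats mentioned before the statement (Borel sigma-algebras, etc.) ensure that the conditional probabilities are well defined and that Fubini applies without issue.
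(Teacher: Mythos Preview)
Your proposal is correct and follows exactly the same approach as the paper's proof: the paper observes that $(\xi,\gamma)$ and $(\hat\xi,\gamma)$ are identically distributed (equivalently, $\Law(\xi\mid\gamma)=\Law(\hat\xi\mid\gamma)$), then concludes via the triangle inequality. Your write-up simply unpacks these two sentences with more detail.
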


\begin{proof}
    The pairs $(\xi,\gamma)$ and $(\hat\xi,\gamma)$ are identically distributed; therefore $(\xi,\theta)$ and $(\hat\xi,\theta)$ are as well. The result now follows by the triangle inequality.
\end{proof}

\begin{lemma}[{\cite[Lemma A.13]{sellke-slivkins}}]
\label{lem:subgaussian-tail}
    Suppose the scalar random variable $X$ is mean zero and $O(1)$-subgaussian and the event $E$ has $\bbP[E]\leq \delta$. Then 
    \[
    \mathbb E[|X\cdot 1_E|]\leq O\big(\delta\sqrt{\log(1/\delta)}\big).
    \]
\end{lemma}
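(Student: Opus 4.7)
The plan is to prove the bound by truncating $|X|$ at a threshold $M$ chosen so that the subgaussian tail above $M$ has mass comparable to $\delta$. More precisely, I would decompose
\[
\E[|X|\cdot \one_E] = \E[|X|\cdot \one_E \cdot \one_{|X|\leq M}] + \E[|X|\cdot \one_E \cdot \one_{|X|>M}].
\]
The first summand is trivially at most $M\cdot \bbP[E]\leq M\delta$. For the second summand I drop the indicator of $E$ (since it only makes things larger) and estimate the tail contribution using the subgaussian assumption.

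To handle the second term, recall that an $O(1)$-subgaussian mean-zero variable satisfies $\bbP[|X|>t]\leq 2e^{-ct^2}$ for an absolute constant $c>0$. Then by the layer-cake formula,
\[
\E[|X|\cdot \one_{|X|>M}] = M\,\bbP[|X|>M] + \int_M^\infty \bbP[|X|>t]\,\rmd t = O\!\bigl(M e^{-cM^2}\bigr) + O\!\bigl(e^{-cM^2}/M\bigr).
\]
Now I would pick $M = \Theta(\sqrt{\log(1/\delta)})$ (with the implicit constant chosen large enough that $e^{-cM^2}\leq \delta$). With this choice, both terms above are $O(\delta\sqrt{\log(1/\delta)})$, and the first summand $M\delta$ is likewise $O(\delta\sqrt{\log(1/\delta)})$. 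Adding the two contributions gives the claim.

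There is no real obstacle here; the only subtlety is just the choice of threshold, which must balance the two contributions. (An alternative route via H\"older's inequality, applied with exponent $p=\log(1/\delta)$ and using the standard moment bound $(\E|X|^p)^{1/p}=O(\sqrt{p})$ for subgaussian $X$, gives the same estimate in one line; I would mention this as a remark but present the truncation argument as the main proof since it is more elementary and self-contained.)
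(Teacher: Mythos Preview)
Your argument is correct. The truncation at $M=\Theta(\sqrt{\log(1/\delta)})$ is the standard approach, and your layer-cake computation is right; the H\"older alternative you mention also works. Note, however, that the paper does not actually supply its own proof of this lemma---it is quoted verbatim from \cite[Lemma~A.13]{sellke-slivkins} and used as a black box---so there is nothing in the present paper to compare your argument against.
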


Though the statement of Lemma~\ref{lem:freqguarantee} is abstract, our uses of it will be very concrete. Namely $\xi$ will be the unknown mean reward and $\gamma$ the actions and rewards up to some time $t$. The estimate $\theta$ will be obtained simply by an empirical average or linear regression. Then \eqref{eq:chernoff-hypothesis} amounts to Hoeffding's inequality or a multivariate analog. The conclusion \eqref{eq:chernoff-conclusion} for posterior samples will be of great use in the analysis of Thompson sampling.

\section{Linear Bandit}
\label{sec:linear}

In \cite{sellke-slivkins}, it was shown that Thompson sampling is BIC once a mild amount of initial data has been collected almost surely at a fixed time. Here we show a qualitatively similar result for the linear bandit. The notion of ``amount'' of initial data we adopt is that the action vectors $A^{(1)},\dots,A^{(t)}$ taken so far satisfy the spectral condition
\begin{equation}
\label{eq:spectral-explore}
    \sum_{s=1}^t (A^{(s)})^{\otimes 2}\succeq \gamma I_d.
\end{equation}
If $\cA=(e_1,\dots,e_d)$ forms an orthonormal basis (as in the multi-armed bandit setting), this simply means that each action was sampled at least $\gamma$ times.
We say that $\gamma$-spectral exploration has occurred at time $t$ if \eqref{eq:spectral-explore} holds. 
Note that if one is willing to simply purchase initial samples, then $\gamma d$ samples are typically needed to achieve $\gamma$-spectral exploration.
\msedit{The next lemma follows from Lemma~\ref{lem:subgaussian-tail} and is proved in the appendix.}
We note that the factor of $\sqrt{d \log(t)}$ comes from the adaptivity of the exploration, as explained in \cite[Exercise 20.2]{LS19bandit-book}.

\begin{lemma}
\label{eq:subgaussian-book}
Suppose $\gamma$-spectral exploration has occurred almost surely at some (deterministic) time $t$. Then the random vector $\ell^*-\bbE^t[\ell^*]$ has zero mean and is $O(\sqrt{d\log(t)/\gamma})$-subgaussian.
\end{lemma}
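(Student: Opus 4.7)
The plan is to pass through a frequentist subgaussian tail bound for the ordinary least-squares estimator, transfer it to the posterior via Lemma~\ref{lem:freqguarantee}, and then collapse the resulting two-sample bound onto the centered posterior by a one-line Jensen step.

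Let $V_t = \sum_{s=1}^t A^{(s)}(A^{(s)})^\top \succeq \gamma I_d$ and let $\hat\theta = V_t^{-1}\sum_{s=1}^t A^{(s)} r_s$ be the OLS estimator. The self-normalized martingale inequality (cf.~\cite[Exercise~20.2]{LS19bandit-book}) combined with the $O(1)$-subgaussianity of the noise gives, for every $\ell^*$,
\[
\bbP\big[\|\hat\theta - \ell^*\| \ge \eps \,\big|\, \ell^*\big] \le 2\exp\!\big(-c\gamma\eps^2/(d\log t)\big)\qquad\forall\eps > 0,
\]
the factor $d\log t$ being precisely the width of the self-normalized confidence ellipsoid under adaptive play. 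Applying Lemma~\ref{lem:freqguarantee} with $\xi = \ell^*$ and $\theta = \hat\theta$ yields, for an independent posterior sample $\ell^{*\prime} \stackrel{d}{=} \ell^* \mid \cF_t$ drawn under the joint law,
\[
\bbP\big[\|\ell^{*\prime} - \ell^*\| \ge 2\eps\big] \le 4\exp\!\big(-c\gamma\eps^2/(d\log t)\big).
\]
Hence for any fixed unit vector $v$, the scalar $\langle \ell^{*\prime} - \ell^*, v\rangle$ is mean zero with a subgaussian tail of parameter $O(\sqrt{d\log t/\gamma})$; combining with Lemma~\ref{lem:subgaussian-tail} (applied at dyadic scales) promotes this to the MGF bound $\bbE\exp\!\big(s\langle \ell^{*\prime} - \ell^*,v\rangle\big) \le \exp(C^2 s^2/2)$ with $C = O(\sqrt{d\log t/\gamma})$.

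To finish, set $X = \langle \ell^* - \bbE^t[\ell^*], v\rangle$ and $X' = \langle \ell^{*\prime} - \bbE^t[\ell^*], v\rangle$. Conditional on $\cF_t$, the pair $(X,X')$ is iid with mean zero, and $X - X' = \langle \ell^* - \ell^{*\prime}, v\rangle$. Jensen's inequality gives $\bbE[e^{-sX'} \mid \cF_t] \ge 1$, so by conditional independence
\[
\bbE[e^{sX} \mid \cF_t] \le \bbE[e^{sX} \mid \cF_t]\,\bbE[e^{-sX'} \mid \cF_t] = \bbE[e^{s(X-X')} \mid \cF_t].
\]
Taking outer expectations yields $\bbE[e^{sX}] \le \exp(C^2 s^2/2)$. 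Since $v$ was an arbitrary unit vector, this is exactly the claim that $\ell^* - \bbE^t[\ell^*]$ is $O(\sqrt{d\log t/\gamma})$-subgaussian, with the mean-zero statement being immediate from the tower property.

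The main obstacle is getting Step~1 right: one must invoke the correct self-normalized martingale concentration to obtain the $d\log t$ adaptivity factor, rather than the cleaner $\sqrt{\log(1/\delta)}$ bound available for non-adaptive designs. Once that vector-level frequentist tail is in place, Lemma~\ref{lem:freqguarantee} delivers the posterior version verbatim, and the Jensen-based symmetrization is a one-line manipulation.
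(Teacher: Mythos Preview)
Your proof is correct and follows essentially the same route as the paper: a frequentist subgaussian bound on the least-squares estimator via self-normalized concentration, transfer to a posterior sample via Lemma~\ref{lem:freqguarantee}, then collapse $\ell^*-\tilde\ell^*$ onto $\ell^*-\bbE^t[\ell^*]$. Your Jensen symmetrization in the last step is a cleaner spelling-out of what the paper compresses into ``sum of two arbitrarily coupled subgaussian vectors via AM-GM.'' Two minor points: Lemma~\ref{lem:subgaussian-tail} is not the right citation for promoting a tail bound to an MGF bound (it concerns $\bbE[|X\cdot 1_E|]$); you want the standard tail/MGF equivalence for centered variables. Also, the exact tail form you wrote, $\exp(-c\gamma\eps^2/(d\log t))$, is not quite what the self-normalized inequality yields (one gets something closer to $\exp(Cd\log t - c\gamma\eps^2)$), though either form delivers the same $O(\sqrt{d\log t/\gamma})$ subgaussian parameter.
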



Our results will hold for $\eps$-separated action sets $\cA$ as defined below. We view this as a generic assumption, e.g. it holds with high probability when $\cA\subseteq\bbS^{d-1}$ consists of $e^{O(d)}$ randomly chosen points. Moreover it is common to discretize infinite action sets for the purposes of analysis. From a technical point of view, requiring discrete $\cA$ ensures that the event conditioned on has non-negligible probability.

\begin{definition}
    The set $\cA\subseteq \bbS^{d-1}$ is said to be $\eps$-separated if $\|A_1-A_2\|\geq \eps$ for any \textbf{distinct} $A_1, A_2\in\cA$.
\end{definition}

\subsection{Convex Geometry}

\msedit{
We start with two definitions from convex geometry.
}

\begin{definition} 
For $\cK\subseteq \bbR^d$ a compact convex set with non-empty interior, let $\cU(\cK)$ denote the uniform measure on $\cK$. If $v\in\bbR^d$ we let
\[
\width_{v}(\cK)
=
\max_{\ell\in\cK}\,\langle \ell,v\rangle
-
\min_{\ell\in\cK}\,\langle \ell,v\rangle.
\]
\end{definition}

\begin{definition}
    The convex set $\cK\subseteq \bbR^d$ is $r$-regular if $B_r(0)\subseteq \cK\subseteq B_1(0)$.
\end{definition}

\msedit{The next lemma lets us connect convexity conditions to the BIC property. Its proof follows from standard tools and is given in the Appendix. 

\begin{lemma}
\label{lem:width}
For any convex $\cK\subseteq \bbR^d$,
\[
    \bbE^{\ell\sim\cU(\cK)}|\langle \ell,v\rangle|
    \geq
    \Omega(\width_{v}(\cK)/d)
    .
\]
\end{lemma}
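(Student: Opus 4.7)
The plan is to reduce the problem to a one-dimensional anti-concentration estimate for the pushforward distribution $Y := \langle \ell, v\rangle$ under $\ell\sim\cU(\cK)$. The key step is an upper bound on the density $f$ of $Y$ coming from Brunn--Minkowski, which then forces $Y$ to spread out on the scale $w/d$, where $w := \width_v(\cK)$. I tacitly assume $\cK$ has nonempty interior, as otherwise $\cU(\cK)$ is undefined.

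First I would invoke the classical consequence of Brunn--Minkowski that for a convex body $\cK\subseteq\bbR^d$ the function $t \mapsto \Vol_{d-1}\bigl(\cK \cap \{\ell : \langle \ell, v\rangle = t\}\bigr)^{1/(d-1)}$ is concave on its support. Hence $g := f^{1/(d-1)}$ is concave and nonnegative on some interval $[a,b]$ of length $w$. Writing $M := \max g = g(t_0)$ and using concavity together with $g(a),g(b)\geq 0$ gives the ``tent'' lower bound
\[
    g(t) \;\geq\; M \cdot \min\!\bigl(\tfrac{t-a}{t_0-a},\;\tfrac{b-t}{b-t_0}\bigr), \qquad t\in[a,b].
\]
Integrating the $(d-1)$-th power yields $1 = \int_a^b g^{d-1}\,dt \geq M^{d-1}(b-a)/d$, so $\max f = M^{d-1} \leq d/w$.

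The density bound then translates directly to anti-concentration: for any $\delta>0$,
\[
    \bbP[\,|Y|\leq\delta\,] \;\leq\; 2\delta \cdot \max f \;\leq\; 2\delta\, d/w.
\]
Choosing $\delta = w/(4d)$ makes the right side at most $1/2$, whence $\bbE|Y| \geq \delta\cdot \bbP[|Y|>\delta] \geq w/(8d)$, which is the claimed $\Omega(w/d)$ bound.

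The only step requiring genuine care is the density upper bound via Brunn--Minkowski; the reduction to it and the extraction of the bound on $\bbE|Y|$ are otherwise routine. As a sanity check, applying the argument to the simplex $\cK = \{0 \leq x_1 \leq \cdots \leq x_d \leq 1\}$ with $v = e_1$ gives density $d(1-t)^{d-1}$ on $[0,1]$, so $\bbE|x_1| = 1/(d+1) = \Theta(1/d)$, confirming that the dependence on $d$ cannot be improved.
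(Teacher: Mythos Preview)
Your proof is correct and self-contained, but it takes a different route from the paper. The paper argues in two steps by citing known results: first it invokes the Kannan--Lov\'asz--Simonovits bound (Theorem 4.1 of \cite{kannan1995isoperimetric}) to get $\sqrt{\Var_{\ell\sim\cU(\cK)}\langle \ell,v\rangle} \geq \Omega(\width_v(\cK)/d)$, and then it uses a general comparison of moments for log-concave distributions (Theorem 5.1 of \cite{guedon2014concentration}) to pass from the standard deviation to $\bbE|\langle \ell,v\rangle|$. Your argument instead bounds the marginal density directly via Brunn--Minkowski and extracts anti-concentration from that. The upshot is that your proof is more elementary (it uses only Brunn--Minkowski rather than two black-box citations) and yields an explicit constant $1/8$, whereas the paper's proof is shorter but imports heavier machinery. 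In fact the KLS variance bound is itself proved by essentially the density argument you wrote down, so you are in some sense unwinding one of the cited results. One cosmetic remark: your intermediate function $g=f^{1/(d-1)}$ is ill-defined when $d=1$, but in that case $f$ is constant and the density bound $\max f = 1/w = d/w$ is immediate, so the edge case is harmless.
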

}

Our key technical estimate is below. It requires $\ell^*$ to have uniform prior $\cU(\cK)$ over a regular convex body $\cK$. Note that since the left-hand side of \eqref{eq:main-technical-linear-bandit} is $1$-homogeneous, it is also fine for $\ell^*$ to be drawn from a \textbf{centered} Gaussian with covariance $\Sigma$ satisfying $\lambda I_d\preceq \Sigma\preceq \Lambda I_d$.

\begin{corollary}
\label{cor:linear-margin}
    Suppose that $\cA\subseteq\bbS^{d-1}$ is $\eps$-separated, the convex set $\cK\subseteq \bbR^d$ is $r$-regular, and $\ell^*\sim\mu=\cU(\cK)$. Then for each $A_i,A_j\in\cA$, we have
    \begin{align}
    \label{eq:deltai-LB}
    \bbP[A_i=A^*(\ell^*)]
    &\geq
    \lt(\msedit{r\eps/4}\rt)^d,
    \\
    \label{eq:main-technical-linear-bandit}
    \bbE[\langle \ell^*,A_i-A_j\rangle|A^*=A_i]
    &\geq
    \Omega\lt(\frac{r\eps \|A_i-A_j\|}{d}\rt).
    \end{align}
\end{corollary}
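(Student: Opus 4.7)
For both parts I work with the convex ``best-response cell''
\[
\cK_i := \{\ell \in \cK : \langle \ell, A_i\rangle \geq \langle \ell, A\rangle \text{ for all } A \in \cA\},
\]
a convex body obtained as the intersection of $\cK$ with finitely many halfspaces. Since $\ell^*$ is uniform on $\cK$, conditional on $\{A^* = A_i\}$ its law is uniform on $\cK_i$, so both estimates reduce to a volume estimate and a width estimate on $\cK_i$.

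The key construction is a single Euclidean ball $B := B_{c\eps/2}(c A_i)$ with $c := r/(1+\eps/2) \geq r/2$. The containment $B \subseteq B_r(0) \subseteq \cK$ is immediate from $\|c A_i\| + c\eps/2 = c(1+\eps/2) \leq r$. To see $B \subseteq \cK_i$, write $\ell = c A_i + \delta$ with $\|\delta\| \leq c\eps/2$; then for every $A_k \in \cA$,
\[
\langle \ell, A_i - A_k\rangle
= \tfrac{c}{2}\|A_i - A_k\|^2 + \langle \delta, A_i - A_k\rangle
\geq \|A_i - A_k\|\bigl(\tfrac{c\eps}{2} - \|\delta\|\bigr) \geq 0,
\]
using $\langle A_i, A_i - A_k\rangle = \tfrac{1}{2}\|A_i - A_k\|^2$ (since $\|A_i\|=\|A_k\|=1$), Cauchy--Schwarz, and the separation $\|A_i - A_k\| \geq \eps$. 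Hence $A_i$ is optimal throughout $B$.

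Inequality (\ref{eq:deltai-LB}) is then the volume comparison
\[
\bbP[A^* = A_i] = \frac{\Vol(\cK_i)}{\Vol(\cK)} \geq \frac{\Vol(B)}{\Vol(B_1(0))} \geq (c\eps/2)^d \geq (r\eps/4)^d,
\]
using $\cK \subseteq B_1(0)$. For inequality (\ref{eq:main-technical-linear-bandit}), I apply Lemma~\ref{lem:width} to $\cK_i$ in direction $v := A_i - A_j$: since $\langle \ell, v\rangle \geq 0$ on $\cK_i$, the absolute value is harmless and Lemma~\ref{lem:width} gives
\[
\bbE\bigl[\langle \ell^*, v\rangle \bigm| A^* = A_i\bigr] \geq \Omega\bigl(\width_v(\cK_i)/d\bigr).
\]
The width lower bound comes from the same ball: the two points $c A_i \pm (c\eps/2)\,v/\|v\|$ lie in $B \subseteq \cK_i$ and differ by $c\eps\|v\|$ along $v$, so $\width_v(\cK_i) \geq (r\eps/2)\|A_i - A_j\|$.

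The argument is conceptually clean; the only delicacy I anticipate is balancing the two constraints on $c$ (the ball must fit inside $\cK$ yet be small enough that no admissible perturbation flips the optimal arm), and it is precisely this trade-off that pins down the constants $r\eps/4$ in (\ref{eq:deltai-LB}) and $r\eps/d$ in (\ref{eq:main-technical-linear-bandit}).
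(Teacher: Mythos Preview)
Your proof is correct and follows essentially the same strategy as the paper: locate a small Euclidean ball inside the best-response cell $\cK_i=S_i\cap\cK$, then use it for both the volume comparison \eqref{eq:deltai-LB} and the width lower bound feeding into Lemma~\ref{lem:width}. The only cosmetic difference is that the paper first shows $B_{\eps/2}(A_i)\subseteq S_i$ and then invokes homogeneity of the cone $S_i$ to scale by $r/2$ into $\cK$, whereas you build the already-scaled ball $B_{c\eps/2}(cA_i)$ in a single step; the two routes yield identical constants.
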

\vspace{-0.6cm}
\msedit{
\begin{proof}
    Fix $i$ and define the convex set $S_i=\{\ell\in B_1~:~ A_i = A^*(\ell)\}$.
    To prove \eqref{eq:deltai-LB}, we first show 
    \begin{equation}
    \label{eq:Siepsball}
    B_{\eps/2}(A_i)\subseteq S_i.
    \end{equation}
    Indeed suppose $\ell\in \bbR^d$ satisfies $\|\ell-A_i\|\leq \frac{\eps}{2}$. Then for any $a\in\cA$ different from $A_i$, 
    \begin{align*}
    \langle \ell,A_i-A\rangle
    &=
    \langle A_i,A_i\rangle -\langle A_i,a\rangle + \langle \ell-A_i,A_i-A\rangle
    \\
    &\geq
    1-\langle A_i,a\rangle - \frac{\eps}{2}\|A_i-A\|_2
    \\
    &=
    \|A_i-A\|_2^2/2 -\frac{\eps}{2} \|A_i-A\|_2
    \geq 0.
    \end{align*}
    In the last step we used the fact that $t^2-\eps t\geq 0$ for $t\geq \eps$ (recall that $\cA$ is $\eps$-separated). We conclude that $\ell\in S_i$, hence establishing \eqref{eq:Siepsball}.
    By $r$-regularity of $\cK$ and homogeneity of $S_i$ it follows that 
    \[
    B_{r\eps/4}(rA_i/2)\subseteq \cK\cap S_i.
    \]
    Since $\cK\subseteq B_1(0)$ we find $\Vol(\cK\cap S_i)/\Vol(\cK)\geq (r\eps/4)^d$, implying \eqref{eq:deltai-LB}.
    Next, $r$-regularity of $\cK$ and \eqref{eq:Siepsball} imply 
    \begin{align*}
        \width_{v}(S_i\cap\cK)
        &\geq 
        r\cdot \width_v(S_i)
        \\
        &\geq
        \Omega(r\eps \|v \|)
    \end{align*}
    for any $v\in\bbR^d$. Hence by Lemma~\ref{lem:width}, for any fixed $v$
    \[
    \bbE^{\ell\sim \cU(S_i\cap\cK)}\big| \langle \ell,v\rangle\big| 
    \geq
    \Omega(r\eps \|v\|/d).
    \]
    Setting $v=A_i-A_j$, we note that by definition $\langle \ell,A_i-A_j\rangle \geq 0$ for all $\ell\in S_i$. Hence
    \[
    \bbE^{\ell\sim \cU(S_i\cap\cK)} \langle \ell,A_i-A_j\rangle  
    \geq
    \Omega(r\eps \|A_i-A_j\|/d)
    \]
    which is equivalent to \eqref{eq:main-technical-linear-bandit}.
\end{proof}
}

\subsection{Main Result for Linear Bandit}

We now show our main result on Thompson sampling for the linear bandit, namely that $\gamma$-spectral exploration suffices for Thompson sampling to be BIC when $\gamma \geq \frac{C  d^3 \log(\msedit{4/r\eps})}{r^2\eps^2}$ for $\eps$-separated action sets. This roughly means $\wt O\lt(d^4/\eps^2\rt)$ ``well-dispersed'' samples are required for Theorem~\ref{thm:main-linear-bandit} to apply.

\begin{remark}
In fact a more general statement is true. Given an $\eps_t$-discretization $\cA^{(t)}$ of $\cA$, let us recommend an action $A^{(t)}\in\cA^{(t)}$ by performing Thompson sampling on $\cA^{(t)}$. Then the proof of Theorem~\ref{thm:main-linear-bandit} shows that a rational user will always choose an action $\tilde A^{(t)}\in \cA$ within distance $\eps_t$ of $A^{(t)}$, since all other actions are inferior to $A^{(t)}$. 
\end{remark}

\begin{theorem}
\label{thm:main-linear-bandit}
    There exists an absolute constant $C$ such that the following holds. Suppose that $\cA\subseteq\bbS^{d-1}$ is $\eps$-separated, the convex set $\cK\subseteq \bbR^d$ is $r$-regular, and $\ell^*\sim\mu=\cU(\cK)$. If \eqref{eq:spectral-explore} holds almost surely at time $t$ for 
    \[\gamma \geq \frac{C  d^4 \log(t) \log(\msedit{4/r\eps})}{r^2\eps^2},
    \]
    then Thompson sampling for the linear bandit is BIC at time $t$. 
\end{theorem}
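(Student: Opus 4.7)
The plan is to verify the BIC condition \eqref{eq:BIC-def} at time $t$ by decomposing it into a prior-type ``margin'' contribution (bounded below via Corollary~\ref{cor:linear-margin}) and a posterior fluctuation term (bounded above via Lemmas~\ref{eq:subgaussian-book} and \ref{lem:subgaussian-tail}).

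First I would unwind the BIC condition using the standard fact that under Thompson sampling, $A^{(t)}$ is conditionally independent of $\ell^*$ given $\cF_t$ with $\bbP[A^{(t)}=A\mid \cF_t] = \bbP^t[A^*=A]$. Thus for fixed $A,A'\in \cA$ with $A\neq A'$ (so $\|A-A'\|\geq \eps$), \eqref{eq:BIC-def} is equivalent to
\[
\bbE\big[\langle \ell^*, A-A'\rangle\, \bbP^t[A^*=A]\big]\geq 0.
\]
Introducing an independent posterior sample $\hat\ell^*$ (so given $\cF_t$, $\ell^*$ and $\hat\ell^*$ are i.i.d.\ from the posterior, while marginally $\hat\ell^*\sim \mu$), rewriting $\bbP^t[A^*=A] = \bbE^t[\one\{A^*(\hat\ell^*)=A\}]$, and applying the tower property together with $\bar\ell_t := \bbE^t[\ell^*]$, the goal becomes
\[
I_1 - I_2 \;\geq\; 0,
\]
where $I_1 = \bbE[\langle \hat\ell^*, A-A'\rangle\, \one\{A^*(\hat\ell^*)=A\}]$ and $I_2 = \bbE[\langle \hat\ell^*-\bar\ell_t, A-A'\rangle\, \one\{A^*(\hat\ell^*)=A\}]$.

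For the main term $I_1$, since $\hat\ell^*$ has marginal law $\mu = \cU(\cK)$, Corollary~\ref{cor:linear-margin} applied to $\hat\ell^*$ yields
\[
I_1 \geq \Omega\!\left(\bbP[A^*=A]\cdot \frac{r\eps\,\|A-A'\|}{d}\right),\qquad \bbP[A^*=A]\geq (r\eps/4)^d.
\]

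The main obstacle is to bound $|I_2|$ tightly enough for cancellation. A naive Cauchy--Schwarz bound would produce a factor $\sqrt{\bbP[A^*=A]}$ rather than $\bbP[A^*=A]$; after plugging in $\bbP[A^*=A]\geq (r\eps/4)^d$, this would force $\gamma$ to be exponentially large in $d$. Instead, I would apply Lemma~\ref{lem:subgaussian-tail} to $X = \langle \hat\ell^*-\bar\ell_t, A-A'\rangle$ --- which is mean zero and $O(\|A-A'\|\sqrt{d\log(t)/\gamma})$-subgaussian by Lemma~\ref{eq:subgaussian-book}, since $\hat\ell^*-\bar\ell_t$ has the same unconditional law as $\ell^*-\bar\ell_t$ --- and to the event $E = \{A^*(\hat\ell^*)=A\}$ with $\bbP[E] = \bbP[A^*=A] =: p$. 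This yields
\[
|I_2| \leq O\!\left(\|A-A'\|\sqrt{\frac{d\log t}{\gamma}}\cdot p\sqrt{\log(1/p)}\right).
\]
Bounding $\sqrt{\log(1/p)}\leq \sqrt{d\log(4/r\eps)}$ and noting that the factors $p\|A-A'\|$ cancel on both sides of $I_1\geq |I_2|$, the requirement collapses to $r\eps/d \gtrsim \sqrt{d^2\log(t)\log(4/r\eps)/\gamma}$, which after rearrangement matches the hypothesis $\gamma\gtrsim d^4\log(t)\log(4/r\eps)/(r^2\eps^2)$ of the theorem.
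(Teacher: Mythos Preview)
Your proposal is correct and follows essentially the same approach as the paper's proof: decompose the BIC inequality into a prior ``margin'' term handled by Corollary~\ref{cor:linear-margin} and a posterior fluctuation term handled by Lemmas~\ref{eq:subgaussian-book} and~\ref{lem:subgaussian-tail}, with the key observation that Lemma~\ref{lem:subgaussian-tail} produces a factor $p\sqrt{\log(1/p)}$ rather than the $\sqrt{p}$ one would get from Cauchy--Schwarz. The only cosmetic difference is that you introduce an explicit posterior sample $\hat\ell^*$ and write the target as $I_1-I_2$, whereas the paper works directly with $\bbE[1_{A^*=A_i}\cdot \bbE^t[\theta_i-\theta_j]]$ and uses the exchangeability of $(A^{(t)},\cF_t)$ and $(A^*,\cF_t)$ under Thompson sampling; these are the same computation.
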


\begin{proof}
As in the proof of Theorem 4.1 in \cite{sellke-slivkins}, it suffices to show that 
\[
    \bbE[1_{A^*=A_i}\cdot \bbE^{t}[\langle \ell^*,A_i-A_j\rangle]]\geq 0
\]
for any action $A_j\neq A_i$. Defining $\delta_i=
    \bbP[A_i=A^*(\ell^*)]
    \stackrel{\eqref{eq:deltai-LB}}{\geq}
    \lt(\msedit{\frac{r\eps}{4}}\rt)^d$, we have
\begin{align*}
    \bbE\big[1_{A^*=A_i}\cdot \langle \ell^*,A_i-A_j\rangle\big]
    &=
     \bbE\big[1_{A^*=A_i}\cdot (\langle \ell^*,A_i-A_j\rangle)_+\big]
     \\
     &=
     \delta_i\cdot
     \bbE\big[(\langle \ell^*,A_i-A_j\rangle)_+~|~A^*=A_i\big]
     \\
     &\stackrel{Lem.~\ref{lem:width}}{\geq}
     \Omega(\delta_i r\eps \|A_i-A_j\|/d).
\end{align*}
Recall that $\theta_i=\langle \ell^*,A_i\rangle$ denotes the expected reward for action $A_i$. It remains to upper-bound
\[
    \bbE\lt[
    1_{A^*=A_i}
    \cdot
    \lt|
    \bbE^{t}
    [\theta_i-\theta_j]
    -(\theta_i-\theta_j)
    \rt|
    \rt].
\]
By Lemma~\ref{eq:subgaussian-book}, $\frac{\gamma^{1/2}\big(\bbE^{t}[\theta_i-\theta_j]-(\theta_i-\theta_j)\big)}{\|A_i-A_j\|\sqrt{d\log(t)}}$ is centered and $O(1)$ subgaussian (sans conditioning). Using Lemma~\ref{lem:subgaussian-tail}, it follows that
\begin{align*}
    \bbE\lt[
    1_{A^*=A_i}
    \cdot
    \lt|
    \bbE^{t}
    [\theta_i-\theta_j]
    -(\theta_i-\theta_j)
    \rt|
    \rt]
    &
    \leq
    O\lt(\delta_i
    \|A_i-A_j\|
    \sqrt{\frac{d\log(t)\log(1/\delta_i)}{\gamma}}\rt).
\end{align*}
Since we assumed
\[
    \gamma\geq \frac{C d^4 \log(t)\log(4/r\eps)}{r^2\eps^2}
\]
for a large enough absolute constant $C$, we finish the proof via:
\begin{align*}
    \bbE\lt[
    1_{A^*=A_i}
    \cdot
    \lt|
    \bbE^{t}
    [\theta_i-\theta_j]
    -(\theta_i-\theta_j)
    \rt|
    \rt]
    &\leq
    O\lt(\delta_i
    \|A_i-A_j\|
    \sqrt{\frac{d\log(t)\log(1/\delta_i)}{\gamma}}\rt)
    \\
    &\leq
    \Omega\lt(\frac{\delta_i r\eps \|A_i-A_j\|}{d}\rt)
    \\
    &\leq
    \bbE\big[1_{A^*=A_i}\cdot \langle \ell^*,A_i-A_j\rangle\big]
\end{align*}
In the second step we used $\log(1/\delta_i)\leq d\log(4/r\eps)$ which follows from \eqref{eq:deltai-LB}.
\end{proof}

\begin{remark}
    The action set $\cA$ only enters Theorem~\ref{thm:main-linear-bandit} at time $t$. Thus it holds even if the preceding $\gamma$-spectral exploration used actions not in $\cA$. In particular Theorem~\ref{thm:main-linear-bandit} extends to the case that $\cA^{(t)}$ changes over time as long as $\cA^{(t)}$ is $\eps$-separated. 
    
    For example, suppose that each $\cA^{(t)}$ consists of $e^{cd}$ i.i.d. uniform vectors on the unit sphere for a small constant $c$, but the sets $\cA^{(t)}$ may be arbitrarily dependent. (For example, we might replace the actions gradually.) Each $\cA^{(t)}$ is $\eps$-separated with probability $1-e^{-\Omega(d)}$. Then given a $\gamma$-spectral warm start, the algorithm \emph{use Thompson sampling if $\eps$-separation holds, else act greedily} will be BIC and incur expected regret
    \[
    O\lt(d\sqrt{T}\log T + e^{-\Omega(d)}T\rt).
    \] 
    This follows by Proposition 3 of \cite{Russo-MathOR-14}, which allows for changing action sets $\cA^{(t)}$.
\end{remark}

\subsection{Extension to Generalized Linear Bandit}

Here we give an analogous proof for the  generalized linear bandit, which was introduced in \cite{GeneralizedLinear-nips10}. In this model, the expected reward is 
\[
    \bbE[r_t|A^{(t)}]=\chi(\langle A^{(t)},\ell^*\rangle)
\]
for a known strictly increasing \textbf{link function} $\chi:\bbR\to\bbR$ and unknown vector $\ell^*$. A notable example is the logistic bandit where $\chi(x)=\frac{e^x}{1+e^x}$. Recalling that we assumed $\langle \ell^*,x\rangle\in [-1,1]$ for all $\ell^*\in\supp(\mu)$ and $x\in A$, we follow much of the theoretical literature on this model by fixing the non-negative constants
\[
    M_{\chi}=\sup_{x\in [-1,1]} \chi'(x);
    \quad\quad
    m_{\chi}=\inf_{x\in [-1,1]} \chi'(x).
\]
(See e.g. \cite{neu2022lifting} for an analysis that avoids such dependence.)
 Our analysis of the linear bandit extends to this setting assuming \emph{deterministic} initial exploration; the proof is given in the Appendix.
\begin{theorem}
\label{thm:main-GLM}
    Fix a link function $\mu:\bbR\to\bbR$. Suppose that $\cA\subseteq\bbS^{d-1}$ is $\eps$-separated, the convex set $\cK\subseteq \bbR^d$ is $r$-regular, and $\ell^*\sim\mu=\cU(\cK)$, where $r,\eps,m_{\chi}\geq e^{-d}$.
    Given deterministic initial exploration until time $t$ under which \eqref{eq:spectral-explore} holds for
    \[
    \gamma\geq \frac{C M_{\chi}^2 d^3 \log(4/r\eps)}{r^2 m_{\chi}^4 \eps^2},
    \]
    Thompson sampling for the generalized linear bandit is BIC at time $t$.
\end{theorem}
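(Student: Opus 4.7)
The plan is to mirror the proof of Theorem~\ref{thm:main-linear-bandit}, reducing BIC to a ``signal versus noise'' comparison for each pair $A_i, A_j \in \cA$. Writing $\Delta := \chi(\langle \ell^*, A_i\rangle) - \chi(\langle \ell^*, A_j\rangle)$, it suffices to show
\[
\bbE[1_{A^* = A_i} \Delta] \geq \bbE[1_{A^* = A_i} |\Delta - \bbE^t[\Delta]|].
\]
Since $\chi$ is strictly increasing, $A^*(\ell^*)=\argmax_{A}\langle \ell^*, A\rangle$ just as in the linear case, so on $\{A^*=A_i\}$ the mean value theorem gives $\Delta \geq m_\chi \langle \ell^*, A_i - A_j\rangle \geq 0$. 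Corollary~\ref{cor:linear-margin} then yields
\[
\bbE[1_{A^*=A_i}\Delta]\geq \Omega\lt(\frac{m_\chi \delta_i\, r\eps\, \|A_i-A_j\|}{d}\rt),
\]
with $\delta_i\geq (r\eps/4)^d$ as before.

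The main new ingredient required is a GLM analog of Lemma~\ref{eq:subgaussian-book}: under deterministic $\gamma$-spectral exploration, $\ell^* - \bbE^t[\ell^*]$ should be $O(\sqrt{d/(\gamma m_\chi^2)})$-subgaussian. I would establish this by applying Lemma~\ref{lem:freqguarantee} with $\theta$ equal to a quasi-maximum-likelihood (or nonlinear least squares) estimator $\hat\ell$ of $\ell^*$ computed from the observations up to time $t$: standard GLM concentration arguments, using the lower bound $\chi'\geq m_\chi$ on $[-1,1]$, the spectral condition \eqref{eq:spectral-explore}, and the $O(1)$-subgaussian observation noise, give a tail bound of the form $\bbP[\|\hat\ell-\ell^*\|\geq \eps']\leq 2\exp(-c\gamma m_\chi^2 (\eps')^2/d)$. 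Determinism of the exploration eliminates the usual $\log(t)$ adaptivity factor, while the $1/m_\chi$ dependence is natural since $\chi'$ governs the effective slope of the link-transformed regression.

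With this subgaussian bound in hand, the noise can be controlled via the mean-value factorization $\Delta(\ell) = \Phi(\ell)\langle \ell, A_i-A_j\rangle$, where $\Phi(\ell) = \int_0^1 \chi'(\langle \ell, A_j + s(A_i-A_j)\rangle)\,ds \in [m_\chi, M_\chi]$. Combined with $M_\chi$-Lipschitzness of $\chi$, this lets one bound $|\Delta - \bbE^t[\Delta]|$ by a multiple of $|\langle \ell^* - \bbE^t[\ell^*], A_i-A_j\rangle|$, up to lower-order corrections coming from the variation of $\Phi$ across the posterior; the principal term is $O(M_\chi\|A_i-A_j\|\sqrt{d/(\gamma m_\chi^2)})$-subgaussian by the previous step. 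Applying Lemma~\ref{lem:subgaussian-tail} with $\delta=\delta_i$ then yields
\[
\bbE[1_{A^*=A_i}|\Delta-\bbE^t[\Delta]|] \leq O\lt(\frac{M_\chi \delta_i\, \|A_i-A_j\|\sqrt{d\log(1/\delta_i)}}{m_\chi\sqrt{\gamma}}\rt),
\]
and using $\log(1/\delta_i)\leq d\log(4/(r\eps))$ the hypothesized lower bound on $\gamma$ forces noise $\leq$ signal, completing the proof. The main obstacle I anticipate is the GLM posterior concentration step: one must ensure the correct $1/m_\chi$ dependence and avoid spurious nonlinearity factors. A secondary subtlety is the noise bound, where care is needed to preserve the clean $\|A_i-A_j\|$ scaling through the factorization $\Delta=\Phi\cdot Z$ despite the nontrivial variation of $\Phi$ over $[m_\chi, M_\chi]$ under the posterior.
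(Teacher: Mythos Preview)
Your overall architecture---the signal-versus-noise decomposition, the use of Corollary~\ref{cor:linear-margin} together with the mean-value bound $\Delta\geq m_\chi\langle\ell^*,A_i-A_j\rangle$ on $\{A^*=A_i\}$, and the appeal to Lemma~\ref{lem:subgaussian-tail}---matches the paper exactly. The gap is in the concentration step, and it has two consequences.

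First, the paper does \emph{not} establish a subgaussian bound for the GLM posterior error. Instead it invokes Theorem~1 of \cite{li2017provably} (stated as Theorem~\ref{thm:GLM-MLE-subgaussian}): provided $\gamma\geq \frac{CM_\chi^2}{m_\chi^4}(d^2+\log(1/\delta))$, one has the \emph{directional uniform} bound $|\langle\hat\ell-\ell^*,v\rangle|\leq \|v\|/(m_\chi\sqrt{\gamma})$ on an event of probability $\geq 1-\delta$. Taking $\delta=e^{-d^2}$ and passing to posterior samples via Lemma~\ref{lem:freqguarantee}, the quantity $\frac{m_\chi\sqrt{\gamma}\,(\bbE^t[\Delta]-\Delta)}{\|A_i-A_j\|}$ is \emph{bounded} (not merely subgaussian) up to modification on an event of probability $e^{-\Omega(d^2)}$. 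This is why the theorem carries the hypothesis $r,\eps,m_\chi\geq e^{-d}$, which you never use: that assumption ensures the signal $\Omega(\delta_i r\eps m_\chi\|A_i-A_j\|/d)$ dominates the residual additive $e^{-d^2}$ coming from the bad event. Your proposed clean subgaussian tail $\bbP[\|\hat\ell-\ell^*\|\geq\eps']\leq 2\exp(-c\gamma m_\chi^2(\eps')^2/d)$, valid for all $\gamma$, is not what standard GLM theory delivers---the MLE requires a sample-size threshold (depending on $M_\chi^2/m_\chi^4$) even to be well-localized, and above that threshold the guarantee is of the bounded-with-failure-probability type, not subgaussian.

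Second, even granting your bound, it is a \emph{norm} bound and carries an extra $\sqrt{d}$ relative to the directional estimate above. Propagating $O(\sqrt{d/(\gamma m_\chi^2)})$ through your final comparison yields the requirement $\gamma\geq \frac{CM_\chi^2 d^4\log(4/r\eps)}{m_\chi^4 r^2\eps^2}$, a factor $d$ worse than the stated theorem. To hit $d^3$ you need the $d$-free directional rate $\|v\|/(m_\chi\sqrt{\gamma})$ that \cite{li2017provably} provides. So the fix is: replace your posited subgaussian lemma by the high-probability directional bound of Theorem~\ref{thm:GLM-MLE-subgaussian}, carry the $e^{-d^2}$ failure probability as an additive term, and absorb it at the end using $r,\eps,m_\chi\geq e^{-d}$.
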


\subsection{Two Counterexamples}

Theorem 4.7 and Lemma 4.9 of \cite{sellke-slivkins} show that for the multi-armed bandit with an independent product prior over arms, if Thompson sampling is BIC at time $s$ then it is also BIC at any future time $t>s$. (In fact, this holds even if an arbitrary and possibly non-BIC bandit algorithm is used between times $s$ and $t$.) This is an appealing general ``stability'' property for Thompson sampling. In particular, the special case $s=0$ implies that if all arms have the same prior mean reward, then Thompson sampling is BIC with no initial exploration phase.

We give an explicit counterexample (with proof in the appendix) showing that even the $s=0$ case fails for the linear bandit when $(s,t)=(1,2)$ with $d=2$. Precisely:

\begin{proposition}
\label{prop:counterexample}
    Let $\mu\sim \cN(0,I_2)$ and 
    \[
    (A_1,A_2,A_3)=\big( (1,0),~(-1,0),~(1.8,0.6)\big).
\] 
If Thompson sampling recommends action $A_1$ at time $t=2$, this recommendation is \textbf{not} BIC:
    \begin{equation}
    \label{eq:counterexample}
    \bbE[\theta_3~|~A^{(2)}=A_1]>\bbE[\theta_1~|~A^{(2)}=A_1].
    \end{equation}
\end{proposition}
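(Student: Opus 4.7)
The plan is to condition on the time-1 action $X_1 \in \{A_1, A_2, A_3\}$, recalling that at $t=1$ Thompson sampling draws $X_1$ from the prior distribution of $A^*(\ell^*) = \arg\max_i \langle A_i, \ell^*\rangle$ independently of $\ell^*$. A preliminary planar computation identifies $\{A^* = A_1\} = \{\ell_1 > 0,\ \ell_2 < -\tfrac{4}{3}\ell_1\}$ (using $A_2 = -A_1$ for the first cut and comparing $A_1$ vs $A_3$ for the second), and the analogous wedges for $A_2$ and $A_3$. All three probabilities are strictly positive, so all three cases must be analyzed. I would carry out the same substitution on the Thompson-sampled vector $\tilde\ell$ to write the event $\{X_2 = A_1\} = \{\tilde\ell_1 > 0,\ \tilde\ell_2 < -\tfrac{4}{3}\tilde\ell_1\}$ in each case.

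For $X_1 \in \{A_1, A_2\}$, the reward $r_1 = \pm \ell_1 + z_1$ only constrains $\ell_1$, so the true $\ell_2$ remains $\cN(0,1)$ and independent of $(r_1, \tilde\ell_1, \tilde\ell_2)$; since $\{X_2 = A_1\}$ depends only on $(\tilde\ell_1, \tilde\ell_2)$, this yields $\bbE[\ell_2 \mid X_1\in\{A_1,A_2\}, X_2 = A_1] = 0$. For $\bbE[\ell_1 \mid \cdot]$, I would use the tower identity: conditional on $r_1$, both $\ell_1$ and $\tilde\ell_1$ are iid draws from the posterior, so $\bbE[\ell_1 \mid r_1] = \bbE[\tilde\ell_1 \mid r_1]$; hence $\bbE[\ell_1 \bbI_E] = \frac{1}{1+\sigma^2}\bbE[\tilde\ell_1 \bbI_E]$ (Gaussian noise of variance $\sigma^2$, or $\sigma = 0$ for the noiseless case). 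Since $\tilde\ell_1, \tilde\ell_2$ are iid standard normal, a one-line integration by parts gives $\bbE[\tilde\ell_1 \bbI_E] = \int_0^\infty x\,\phi(x)\,\Phi(-\tfrac{4}{3}x)\,dx = \tfrac{1}{10\sqrt{2\pi}} > 0$. Therefore $\bbE[\theta_1 - \theta_3 \mid X_1\in\{A_1,A_2\}, X_2 = A_1] = -0.8\,\bbE[\ell_1 \mid \cdot] < 0$ strictly.

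For $X_1 = A_3$, decompose $\ell^* = xP + yQ$ with $P = A_3/|A_3|$ and $Q \perp P$ of unit length, so $x, y$ are iid $\cN(0,1)$ and the observation only constrains $x$. Write the TS sample as $\tilde\ell = \tilde x P + \tilde y Q$ with $\tilde y \sim \cN(0,1)$ fresh. Substituting into the two defining inequalities of $\{X_2 = A_1\}$ and using $A_2 = -A_1$, the event reduces to $\{\tilde y < -3|\tilde x|\}$, which is symmetric in $\tilde x \leftrightarrow -\tilde x$. Since $(x, \tilde x)$ is a centered joint Gaussian, this forces $\bbE[x \mid X_2 = A_1] = 0$; combined with $\bbE[y \mid \cdot] = 0$ (by independence of $y$ from the observation), we get $\bbE[\ell_1 \mid X_1 = A_3, X_2 = A_1] = \bbE[\ell_2 \mid X_1 = A_3, X_2 = A_1] = 0$, so this case contributes exactly zero.

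Finally, $\bbE[\theta_1 - \theta_3 \mid X_2 = A_1]$ is a convex combination over $X_1 \in \{A_1, A_2, A_3\}$; two contributions are strictly negative and one vanishes, so the total is strictly negative, yielding \eqref{eq:counterexample}. The main subtlety is cleanly decoupling the true $\ell^*$ from the Thompson-sampled $\tilde\ell$ using that $\tilde\ell$ is a fresh posterior draw conditionally independent of $\ell^*$ given the history. This decoupling is exactly what makes $\ell_2$ (respectively $y$) drop out despite the event $\{X_2 = A_1\}$ being highly informative about the posterior.
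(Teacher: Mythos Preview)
Your proof is correct and follows essentially the same route as the paper: both condition on $A^{(1)}$, show strict inequality when $A^{(1)}\in\{A_1,A_2\}$ via the fact that the unobserved second coordinate stays mean-zero while the first becomes positive, and show exact equality when $A^{(1)}=A_3$ via the symmetry of the $A_1$-wedge under reflection across $A_3^\perp$ (the paper phrases this geometrically as $\|A_1\|=\|A_1-A_3\|$, you phrase it algebraically as $\{\tilde y<-3|\tilde x|\}$). Your treatment is somewhat more explicit (the integral $\tfrac{1}{10\sqrt{2\pi}}$, and the noisy-reward factor $\tfrac{1}{1+\sigma^2}$ via $\bbE[\ell_1\mid\tilde\ell_1]=\tfrac{1}{1+\sigma^2}\tilde\ell_1$), whereas the paper works in the noiseless case and argues that $\ell_1>0$ almost surely on the conditioning event.
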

Note that although $A_3$ has different Euclidean norm from $A_1$ and $A_2$, this set can be put onto the unit circle by a linear change of basis (which just changes $\mu$ to a non-spherical Gaussian) in accordance with Section~\ref{sec:linear}.

For our second counterexample, suppose the action set $\cA$ is either the polytope 
\[
    \cP\equiv\lt\{(x_1,\dots,x_d)~:~10|x_d|+2\sqrt{d}\max_{i<d}|x_i|\leq 1\rt\}\subseteq B_1
\]
or its set of extreme points $\cE(\cP)$. We show that for a suitable prior $\mu$, \emph{any} initial exploration algorithm which is BIC or even $\eps$-BIC can be made to require $e^{\Omega(d)}$ timesteps to explore in the $d$-th coordinate.

Note that for $\cP$ as above, all extreme points satisfy $x_d=0$ except for $\pm\hat A = \lt(0,0,\dots,0,\pm \frac{1}{10}\rt)$, and so $\gamma$-spectral exploration for any $\gamma>0$ requires exploring these actions. Our prior distribution $\ell^*\sim\mu$ will be uniform over the ``biased'' convex body $\cK/\sqrt{d}$ for
\[
    \cK
    \equiv
    [-0.5,1]^{d-1} \times [-1,1].
\]

\begin{proposition}
\label{prop:linear-bandit-LB}
    With $\cA,\mu$ as above for $d\geq 100$, suppose that $\bbA$ is either an $0.01$-BIC algorithm on action set $\cE(\cP)$, or a BIC algorithm on $\cP)$. 
    If at least one of $\hat A$ or $-\hat A$ is almost surely explored by $\bbA$ before time $T$, then $T\geq \exp(\Omega(d))$.
\end{proposition}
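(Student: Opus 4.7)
The plan is to compare $\pm\hat{A}$ against the ``all-positive'' extreme point $v = (1/(2\sqrt{d}),\ldots,1/(2\sqrt{d}),0)$, which lies in both candidate action sets and has far larger expected reward under $\mu$ than $\pm\hat{A}$. Under $\mu$ the coordinates $\ell^*_1,\ldots,\ell^*_{d-1}$ are i.i.d.\ uniform on $[-0.5/\sqrt{d},\,1/\sqrt{d}]$, so $\theta_v = \sum_{i<d}\ell^*_i/(2\sqrt{d})$ is a sum of $d-1$ independent variables, each supported in an interval of length $0.75/d$, with prior mean $\bbE[\theta_v] = \tfrac{d-1}{8d} \geq 0.12$ for $d\geq 100$. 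Hoeffding's inequality then yields $\bbP[\theta_v \leq 0.05] \leq e^{-\Omega(d)}$. Meanwhile, $\ell^*_d \in [-1/\sqrt{d},\,1/\sqrt{d}]$ gives the deterministic bound $|\theta_{\pm\hat A}| = |\ell^*_d|/10 \leq 1/(10\sqrt{d}) \leq 0.01$.

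Next I would exploit the $\eps$-BIC hypothesis with $\eps \leq 0.01$ (which covers both cases of the proposition, since BIC corresponds to $\eps = 0$). Suppose toward contradiction that $\bbA$ plays $\hat{A}$ or $-\hat{A}$ with prior probability $1$ by round $T$. The union bound gives $\sum_{t\leq T}\bbP[A^{(t)}\in\{\hat A,-\hat A\}] \geq 1$, and pigeonhole then yields some $t^*\leq T$ and $A\in\{\hat A,-\hat A\}$ with $p := \bbP[A^{(t^*)} = A] \geq 1/(2T)$. Multiplying the $\eps$-BIC inequality \eqref{eq:eps-BIC} at $(t^*,A,v)$ through by $p$ gives the integrated form
\[
\bbE\big[(\theta_A-\theta_v)\cdot 1_{A^{(t^*)}=A}\big] \geq -\eps\, p,
\]
which, combined with the deterministic bound $|\theta_A|\leq 0.01$, yields $\bbE\big[\theta_v \cdot 1_{A^{(t^*)}=A}\big] \leq (\eps+1/(10\sqrt{d}))\, p \leq 0.02\, p$.

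For the matching lower bound I would split on $\{\theta_v > 0.05\}$ and use the crude deterministic range $\theta_v \geq -1/4$ (which follows from $\ell^*_i \geq -0.5/\sqrt{d}$):
\[
\bbE\big[\theta_v\cdot 1_{A^{(t^*)}=A}\big] \;\geq\; 0.05\, p \;-\; \tfrac{1}{4}\,\bbP[\theta_v \leq 0.05] \;\geq\; 0.05\, p - e^{-\Omega(d)}.
\]
Combining the two bounds gives $0.03\, p \leq e^{-\Omega(d)}$, hence $p \leq e^{-\Omega(d)}$, and finally $T \geq 1/(2p) \geq e^{\Omega(d)}$.

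The main subtlety—and what unifies the two hypotheses—is that $\eps$-BIC must be applied in its integrated form $\bbE[(\theta_A-\theta_{A'})\cdot 1_{A^{(t)}=A}] \geq -\eps\bbP[A^{(t)}=A]$ rather than as a pointwise inequality on posterior means $\bbE^t[\theta_A-\theta_{A'}]$; the latter is strictly stronger than the definition and would over-constrain the algorithm. Once this is set up, the pigeonhole choice of $t^*$, the threshold $0.05$, and the arithmetic are routine, and the condition $d \geq 100$ enters only to ensure $1/(10\sqrt{d}) \leq 0.01$, so that the coordinate bound on $|\theta_{\pm\hat A}|$ absorbs the allowed $\eps$ slack.
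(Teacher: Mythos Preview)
Your argument is correct and takes a somewhat different route from the paper's. Both proofs compare $\pm\hat A$ against the same vertex $v=A_1=(1/(2\sqrt d),\dots,1/(2\sqrt d),0)$ and rely on the Hoeffding bound $\bbP[\theta_v\le 0.05]\le e^{-\Omega(d)}$. The paper, however, does not pigeonhole to a single round: it replaces every play of $\pm\hat A$ by $A_1$ to form a modified algorithm $\bbA'$, uses $\eps$-BIC in the aggregated form $R_T(\bbA)+\eps m\ge R_T(\bbA')$, and then upper-bounds $R_T(\bbA)+\eps m-R_T(\bbA')$ by isolating the \emph{first} exploration time $\tau$ (where $\bbE^{\tau}[\langle\ell^*,\hat A\rangle]=0$ contributes a guaranteed deficit $\ge 1/9$) and controlling the other $T-1$ rounds via Jensen together with the same Chernoff estimate. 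Your single-round approach is shorter, avoids the stopping-time step, and handles the continuous action set $\cP$ directly (you only need $v\in\cA$ and $\bbP[A^{(t^*)}=A]>0$), whereas the paper first invokes a separate reduction from $\cP$ to $\cE(\cP)$. One minor arithmetic slip: in your split on $\{\theta_v>0.05\}$ the first piece yields only $0.05\,\bbP[A^{(t^*)}=A,\,\theta_v>0.05]\ge 0.05\big(p-\bbP[\theta_v\le 0.05]\big)$, so the combined lower bound is $0.05\,p-0.3\,\bbP[\theta_v\le 0.05]$ rather than $0.05\,p-\tfrac14\,\bbP[\theta_v\le 0.05]$; this of course does not affect the conclusion.
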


\section{Initial Exploration for Combinatorial Semibandit}

It was found in \cite{hu2022incentivizing} that for incentivized exploration, the semibandit problem is well-suited for generalizing \cite{sellke-slivkins}. Indeed one can still assume the atoms $i\in [d]$ give rewards independently with probabilities $\theta_1,\dots,\theta_d$ which are independent under the prior $\mu$; this allowed them to generalize several proofs using independence of atoms rather than full actions. However \cite{hu2022incentivizing} did not extend the initial exploration scheme of \cite{sellke-slivkins}, instead proposing several algorithms with exponential sample complexity. Here we find the natural extension of \cite{sellke-slivkins}, again connecting the sample complexity of BIC initial exploration to the value of a two-player zero sum game.

Following \cite{hu2022incentivizing}, we make some non-degeneracy assumptions. Namely we require that all atom rewards are almost surely in $[0,1]$ and their distributions satisfy:
\begin{enumerate}
    \item $\bbE[\theta_{\ell}]\geq\tau$.
    \item $\Var[\theta_{\ell}]\geq\sigma^2$.
    \item $\bbP[\theta_{\ell}<x]\geq \exp(-x^{-\alpha})$.
\end{enumerate}
for fixed constants $\tau,\sigma,\alpha>0$.

Under these conditions, we give an initial exploration algorithm in the style of \cite{sellke-slivkins}. We will explore the arms in a deterministic order which is defined via the lemmas below. Throughout, we say that an atom has been $N$-explored if it has been sampled at least $N$ times. Moreover we define the event
\[
    \ZEROS_{<j,N} =
    \lt\{
    \hat p_{N}(i)=0~\forall i<j
    \rt\}
\]
that each $i<j$ receives no reward in its first $N$ samples,
and also set
\begin{equation}
\label{eq:eps-j}
    \eps_j=\bbP[\ZEROS_{<j,N}].
\end{equation}
It follows from the assumptions above that for all $j\in [d]$,
\[
    \eps_j
    \geq
    (1-p)^{dN}\exp(-dp^{-\alpha}).
\]
In particular setting $p=\frac{1}{dN}$ we find:
\begin{equation}
\label{eq:eps-LB}
    \log(1/\eps_j)
    \leq
    \Omega\lt(d^{1+\alpha}N^{\alpha}\rt).
\end{equation}

\begin{lemma}
\label{lem:get-new-action}
    Let $S_{j-1}=\{a_1,\dots,a_{j-1}\}\subseteq [d]$ and assume that all atoms in $S_{j-1}$ have been $N$-explored almost surely at time $T_{j-1}$, for 
    \begin{equation}
    \label{eq:N-LB}
        N\geq \big(20d/\tau\big)^{1+\alpha}\log(20d/\tau).
    \end{equation}
    Then conditioned on the event $\ZEROS_{<j,N}$, the action with highest posterior mean contains an action in $[d]\backslash S_{j-1}$. More generally, let
    \[
    x_j=\max\big(1_{\ZEROS_{<j,N}},b_j\big)
    \]
    for $y_j=b_j\sim \Ber\lt(q_j\rt)$ an independent Bernoulli random variable. Then the conclusion remains true conditionally on the event $x_j=1$ whenever $q_j\leq \eps_j=\bbP[\ZEROS_{<j,N}]$.
\end{lemma}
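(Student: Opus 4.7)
The plan is to combine a one-atom posterior contraction bound with atom-level independence to reduce the set-level comparison to a scalar comparison.

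First, to bound the posterior mean of an individual atom after $N$ zero observations, I would write
\[
\bbE[\theta_i \mid N\text{ zeros}] = \frac{\bbE[\theta_i(1-\theta_i)^N]}{\bbE[(1-\theta_i)^N]}
\]
and split the numerator at the threshold $\delta = \tau/(10d)$, yielding an upper bound of $\delta + (1-\delta)^N/\bbE[(1-\theta_i)^N]$. The non-degeneracy assumption $\bbP[\theta_\ell < x] \geq e^{-x^{-\alpha}}$ lower-bounds the denominator by $(1-\delta/2)^N e^{-(2/\delta)^\alpha}$, so the residual term is at most $((1-\delta)/(1-\delta/2))^N e^{(2/\delta)^\alpha} \leq e^{-N\delta/2 + (2/\delta)^\alpha}$. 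The hypothesis $N\geq (20d/\tau)^{1+\alpha}\log(20d/\tau)$ is calibrated precisely so that $N\delta/2$ dominates the exponential penalty $(2/\delta)^\alpha$ by a $\log(20d/\tau)$ factor, giving $\bbE[\theta_i\mid N\text{ zeros}]\leq \tau/(2d)$.

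Next, since $\ZEROS_{<j,N}$ factors over the atoms in $S_{j-1}$, atom independence survives conditioning on it. Hence for any $A\subseteq S_{j-1}$,
\[
\bbE^t[\theta_A \mid \ZEROS_{<j,N}] = \sum_{i\in A}\bbE[\theta_i\mid N\text{ zeros}] \leq |A|\cdot \tau/(2d)\leq \tau/2,
\]
whereas any $A^\star \in \cA$ containing some $j'\in [d]\setminus S_{j-1}$ has $\bbE^t[\theta_{A^\star}\mid \ZEROS_{<j,N}]\geq \bbE[\theta_{j'}]\geq \tau > \tau/2$, by independence of $\theta_{j'}$ from the observations on $S_{j-1}$ together with non-negativity of atom rewards. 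This gives the first assertion.

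For the $x_j$-conditioning, independence of $b_j$ from $\theta$ yields the identity
\[
\bbE^t[\theta_A \mid x_j=1] = \lambda\,\bbE^t[\theta_A\mid \ZEROS_{<j,N}] + (1-\lambda)\,\bbE[\theta_A],\qquad \lambda = \frac{(1-q_j)\eps_j}{(1-q_j)\eps_j + q_j},
\]
and the hypothesis $q_j\leq \eps_j$ forces $\lambda \geq (1-\eps_j)/(2-\eps_j)$, i.e.\ $\lambda$ is essentially $\geq 1/2$. I would take $A^\star$ to be the a priori optimal action in $\cA$ among those containing at least one atom from $[d]\setminus S_{j-1}$; then both summands should favor $A^\star$ over any $A\subseteq S_{j-1}$: the $\ZEROS$ summand by the $\tau/2$ gap established above (weighted by $\lambda\geq 1/2$), and the prior summand by optimality of $A^\star$ within the restricted class. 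The main obstacle is the ``prior-side'' comparison, namely verifying that the restricted benchmark $A^\star$ does not suffer a large enough prior-mean deficit (relative to a potentially better unrestricted prior optimum lying in $S_{j-1}$) to overturn the $\ZEROS$-side advantage; it is precisely here that tightness of $q_j\leq \eps_j$ matters, since any smaller $\lambda$ would let an action $\subseteq S_{j-1}$ win on the prior side alone.
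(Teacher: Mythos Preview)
Your treatment of the $\ZEROS_{<j,N}$ conditioning is correct and matches the paper's, with one cosmetic difference: you bound $\bbE[\theta_i\mid N\text{ zeros}]$ directly through the ratio $\bbE[\theta_i(1-\theta_i)^N]/\bbE[(1-\theta_i)^N]$, whereas the paper instead bounds the tail probability $\bbP[\theta_a\geq\tau/(3d)\mid\ZEROS_{<j,N}]$ via a likelihood-ratio comparison $\big(\tfrac{1-\tau/3d}{1-\tau/6d}\big)^N$ and then converts to an expectation bound using $\theta_a\leq 1$. Both routes land at $\bbE[\theta_a\mid\ZEROS_{<j,N}]<\tau/(2d)$, and your set-level conclusion ($\tau/2$ versus $\tau$) under $\ZEROS_{<j,N}$ is exactly the paper's.

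For the $x_j$ conditioning, however, there is a genuine gap, which you correctly flag as ``the main obstacle'' but do not close. Your mixture identity is right, and the bound $\lambda\geq(1-\eps_j)/(2-\eps_j)$ is the correct one---but note this is slightly \emph{below} $1/2$, not above. More importantly, with weight $1-\lambda\approx\tfrac12$ on the prior term, an action $A\subseteq S_{j-1}$ can have prior mean as large as $|A|$, while your benchmark $A^\star$ is only guaranteed prior mean $\geq\tau$; the $\ZEROS$-side advantage of order $\tau/2$ cannot in general absorb a prior-side deficit of order $d$. (Concretely: with $d=2$, $\cA=\{\{a_1\},\{a_2\}\}$, $\bbE[\theta_1]=0.9$, $\bbE[\theta_2]=\tau=0.1$, $S_1=\{a_1\}$, $q_2=\eps_2$, one computes $\bbE[\theta_{\{a_1\}}\mid x_2=1]\approx 0.46>0.1=\bbE[\theta_{\{a_2\}}\mid x_2=1]$.) The paper handles this step only by the single sentence ``Thanks to the factor of two, this implies that $\arg\max_{A'\in\cA}\bbE[\theta_{A'}\mid x_j=1]\not\subseteq S_{j-1}$ for $q_j\leq\eps_j$'', without supplying the prior-side control you are looking for. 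So the obstacle you isolate is real, and the paper's written argument does not visibly resolve it either.
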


\begin{proof}
    We claim that for any actions $A,A'\in\cA$ with $A\subseteq S_{j-1}$ and $A'\not\subseteq S_{j-1}$, we have 
    \[
    \bbE[\theta_A~|~\ZEROS_{<j,N}]\leq\bbE[\theta_{A'}~|~\ZEROS_{<j,N}]/2.
    \]
    Thanks to the factor of two, this implies that
    \[
    \arg\max_{A'\in \cA}\bbE[\theta_{A'}~|~x_j=1] \not\subseteq S_{j-1}
    \]
    for $q_j\leq\eps_j$. (Note that the left-hand side is a deterministic subset of $[d]$ because $\{x_j=1\}$ is just a single event; we do not condition on all the information of $\cF_{T_{j-1}}$.)

    First, if $a\notin S_{j-1}$, then $\bbE[\theta_a~|~\ZEROS_{<j,N}]=\bbE[\theta_a]\geq \tau$ by assumption. It remains to show that 
    \begin{equation}
    \label{eq:bad-so-far-bound-v0}
    \bbE[\theta_a~|~\ZEROS_{<j,N}]\stackrel{?}{<} \frac{\tau}{2d},\quad \forall~a\in S_{j-1}.
    \end{equation}
    We will show that in fact
    \begin{equation}
    \label{eq:bad-so-far-bound}
    \bbP\lt[\theta_a\geq\frac{\tau}{5d}|~\ZEROS_{<j,N}\rt]
    \leq
    \frac{\tau}{5d}
    \end{equation}
    which implies \eqref{eq:bad-so-far-bound-v0} since $\theta_a\leq 1$ almost surely.
    To see \eqref{eq:bad-so-far-bound} we use likelihood ratios to bound probabilities. Note that for any $a\in S_{j-1}$
    \begin{align*}
    \bbP\lt[\theta_a\geq \frac{\tau}{3d}~|~\ZEROS_{<j,N}\rt]
    \leq
    \frac{\bbP[\theta_a\geq \frac{\tau}{3d}~|~\ZEROS_{<j,N}]}
    {\bbP[\theta_a\leq \frac{\tau}{6d}~|~\ZEROS_{<j,N}]}
    &\leq
    \lt(\frac{1-\frac{\tau}{3d}}{1-\frac{\tau}{6d}}\rt)^N
    \frac{\bbP[\theta_a\geq \frac{\tau}{3d}]}
    {\bbP[\theta_a\leq \frac{\tau}{6d}]}
    \\
    &\leq \frac{e^{-(N\tau/10d)}}{\bbP[\theta_a\leq \frac{\tau}{6d}]}
    \\
    &\leq
    \exp\lt(\big(10d/\tau\big)^{\alpha}-\frac{N\tau}{10d}\rt).
    \end{align*}
    Since $N$ is at least
    \[
    \big(20d/\tau\big)^{1+\alpha}\log(20d/\tau)
    \geq 2\big(10d/\tau\big)^{1+\alpha}   
    \log(10d/\tau)
    \]
    the upper bound on $\bbP\lt[\theta_a\geq \frac{\tau}{3d}~|~\ZEROS_{<j,N}\rt]$ is at most 
    \[
        \exp(-\big(10d/\tau\big)^{\alpha}\log(10d/\tau))\leq (\tau/10d)^{\big(10d/\tau\big)^{\alpha}}\leq \frac{\tau}{3d}.
    \]
    This concludes the proof.
\end{proof}

\subsection{Algorithm~\ref{alg:main}}

In light of Lemma~\ref{lem:get-new-action}, we can assume without loss of generality that the atoms are ordered such that $a_j$ is BIC conditioned on the event $1=\max(1_{\ZEROS_{<j,N}},b_j)$, for $b_j\sim \Ber(q_j)$ as in the lemma. Then we will explore the atoms in increasing order. For each $j\in [d]$, our high-level strategy is as follows:
\begin{enumerate}
    \item On the event $\ZEROS_{<j,N}$, obtain $N$ samples of $a_j$.
    \item Exponentially grow the probability to have $N$-explored $a_j$.
\end{enumerate}
Similarly to \cite{sellke-slivkins}, the main insight is that it is possible to achieve an exponential growth rate. This enables sample efficiency even when the probability from the first phase is exponentially small. We will assume $N$ is large enough to satisfy \eqref{eq:N-LB}; if not, one simply increases $N$ and obtains some additional samples in using the algorithm.

Pseudo-code for our Algorithm~\ref{alg:main} is given below; note that when we say to ``exploit'' given a signal, we simply mean the planner should recommend the greedy action given some signal he generated; this is BIC by definition. The policies $\pi_j$ are important for the exponential growth strategy and are non-obvious to construct. However much of the algorithm can be understood more easily. First, in the $j$-th iteration of the outer loop, we collect $N$ samples of atom $a_j$ almost surely, this having been accomplished for all $i<j$ in the previous iterations. The starting point for each loop is to sample $a_j$ on the event $\ZEROS_{<j,N}$ that all atoms $i<j$ received no reward during their first $N$ samples. While $\ZEROS_{<j,N}$ has tiny probability (denoted $\eps_j$), we grow the probability to sample $a_j$ exponentially using $\pi_j$. The sample complexity is largely dictated by the optimal such exponential growth rate, which is the minimax value of a two-player zero sum game; therefore we take $\pi_j$ to be an approximately optimal strategy for said game.

\begin{algorithm2e}[t]
\caption{\MainALG}
\label{alg:main}
\SetAlgoLined\DontPrintSemicolon
\textbf{Parameters:}
    Number of desired samples $N$ satisfying \eqref{eq:N-LB},
    Minimax parameter \msedit{$\lambda=\ulambda/2d$}.
\\
\textbf{Given:}
recommendation policies $\pi_2 \cdots \pi_d$ for \PaddedPhase.
\\
\textbf{Initialize:} \ExplorePhase[1]
\\
\For{each arm $j=a_{1},\dots,a_{d}$}{ \label{line:algloop}
    \tcp*[l]{each arm $a_{i}$ for $i<j$ has been sampled at least $N$ times}
    \vspace{1mm}
    Generate $y_j=b_j\sim \Ber(\eps_j)$ and set
    $x_j=\max(1_{\ZEROS_{<j,N}},b_j)$.
    \\
    \ExploitPhase[N], conditionally on the signal $x_j$. 
    \label{line:alg-first-step}
    \\
     \tcp*[l]{If $x_j=1$, then this explores $a_j$}
    $p_j\leftarrow \eps_j$.
    \\
    \tcp*[l]{main loop: exponentially grow the exploration probability}

    \While{$p_j< 1$}{
    \tcp*[l]{Try to increase $b_j$ to $1$, and maintain $y_j=\bbP[b_j=1]$.}
    Generate $b_j\sim \Ber\big(\min(1,\frac{p_j\lambda}{1-p_j})\big)$ and set $z_j=\max(b_j,y_j)$.
    \\
    \uIf{$z_j=1$}{
        \PaddedPhase: use policy $\pi_j$ for $N$ time-steps \label{line:padded2}
        }
    Otherwise, exploit for $N$ time-steps
    \\
    Update $p_j\leftarrow \min\lt(1,\, p_j\, (1+\padding)\rt)$.
} 
} 
\end{algorithm2e}

\subsection{The Policy $\pi_j$}

The policy $\pi_j$ is constructed via a $j$-recommendation game, a two-player zero sum game which generalizes the one considered in \cite{sellke-slivkins}. In particular, the optimal exponential growth rate of exploration is dictated by its minimax value in the perfect information (infinite-sample) setting. In the Appendix, we treat finite-sample approximations to it which are required for constructing $\pi_j$.

\begin{definition}
The \emph{infinite sample $j$-recommendation game} is a two-player zero sum game played between a \emph{planner} and an \emph{agent}. The players share a common independent prior over the true mean rewards $(\theta_i)_{i\leq j}$, and the planner gains access to the values of $(\theta_i)_{i\leq j}$. The planner then either: 1) \emph{Recommends an arm $A_j\in \cA_j$ containing $a_j$}, or 2) \emph{Does nothing}.

In the first case, the agent observes $A_j\in \cA_j$ and chooses a response $A_{-j}$ from a mixed strategy on $\cA_{-j}$ which can depend on $A_j$. The payoff for the planner is zero if the planner did nothing, and otherwise $\theta_{A_j}-\theta_{A_{-j}}$. Let the minimax value of the game for the planner be $\lambda_j\geq 0$ and 
\[
\ulambda\equiv\min_{j\in[d]}\lambda_j.
\]
\end{definition}

It is clear that $\lambda_j\geq 0$ since the first player can always do nothing. In general, the value of this game measures the planner's ability to convince the agent to try arm $j$. As in \cite{sellke-slivkins}, this can be much larger than the expected regret from not being able to play arm $j$. The importance of the value $\ulambda$ is explained by the following lemma, which ensures that with some care, it is possible to increase the $j$-exploration probability by a factor $1+\Omega(\ulambda)$ at each stage. 

\begin{lemma}
\label{lem:BIC-growth}
Let \msedit{$\lambda=\ulambda/2d$}, so in particular $\lambda\leq \lambda_j/2$. For $N\geq \Omega\lt(\frac{d^2 \log|\cA|}{\lambda^2}\rt)$, let $z$ be a signal such that conditioned on $z=1$, with probability at least $\frac{1}{1+\lambda}$ the first $j$ arms have all been $N$-explored. Moreover, suppose that the event $z=1$ is independent of the true mean reward vector $\vec \mu$. Then there is a BIC policy which always plays an action in $\cA_j$ when $z=1$.
\end{lemma}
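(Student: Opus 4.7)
On the event $z=1$, I would define the planner's policy in two cases. Let $E_{\text{expl}}$ denote the event that each atom $a_i$ for $i\leq j$ has already been $N$-explored. On $E_{\text{expl}}$, form empirical means $\hat\theta_i$ from the $N$ samples of each atom and run a near-optimal planner strategy from a finite-sample analogue of the infinite sample $j$-recommendation game, using $\hat\theta$ in place of $\theta$. On $E_{\text{expl}}^c$, recommend an arbitrary fixed $A_0\in\cA_j$. By assumption, $\bbP[E_{\text{expl}}^c\mid z=1]\leq \lambda/(1+\lambda)$.

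Using the standard reformulation of BIC as a zero-sum game, it suffices to show that for every response function $f:\cA_j\to\cA$,
\[
\bbE\big[(\theta_{\pi}-\theta_{f(\pi)})\cdot 1_{\pi\in\cA_j}\,\big|\,z=1\big]\geq 0,
\]
where $\pi$ denotes the recommended action. The independence of $z=1$ from $\vec\mu$ ensures that conditioning on $z=1$ leaves the prior intact, so I can freely invoke the prior in this expectation. I would decompose it according to $E_{\text{expl}}$ versus $E_{\text{expl}}^c$.

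On $E_{\text{expl}}$, Hoeffding's inequality applied entrywise to $\hat\theta$, together with a union bound over $\cA$, yields $\max_{A\in\cA}|\hat\theta_A-\theta_A|\leq O(\lambda)$ outside a failure event of probability $\leq 1/|\cA|^2$, provided $N\geq \Omega(d^2\log|\cA|/\lambda^2)$ (each atom is estimated to accuracy $O(\lambda/d)$, then sums of up to $d$ atoms remain within $O(\lambda)$). Transferring the infinite-sample minimax planner strategy to this finite-sample setting therefore costs only $O(\lambda)$ in expected payoff, so the planner achieves expected payoff at least $\lambda_j-O(\lambda)\geq \lambda_j/2$ against any agent response, using the cushion $\lambda\leq \lambda_j/2$. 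A subtlety is that the game restricts the agent to $\cA_{-j}$, while BIC demands comparisons against all of $\cA$; I would handle this by arranging the finite-sample strategy to recommend a $\hat\theta$-maximizer within an appropriate subset of $\cA_j$, so that intra-$\cA_j$ comparisons are controlled by the same concentration bound.

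On $E_{\text{expl}}^c$, the conditional probability is at most $\lambda/(1+\lambda)$ and $|\theta_A-\theta_{A'}|\leq d$ deterministically, so the negative contribution is at most $\lambda d/(1+\lambda)$. Combining with the positive contribution from $E_{\text{expl}}$ and using $\lambda\leq \ulambda/(2d)\leq \lambda_j/(2d)$,
\[
\tfrac{1}{1+\lambda}\cdot\tfrac{\lambda_j}{2}-\tfrac{\lambda}{1+\lambda}\cdot d\geq \tfrac{\lambda d - \lambda d}{1+\lambda}=0,
\]
which establishes BIC. The main technical hurdle is the finite-sample game approximation: verifying that a uniformly $O(\lambda)$-accurate reward estimate lets the planner nearly match the infinite-sample minimax value, while also arranging the strategy to be $\hat\theta$-maximizing within $\cA_j$ so that BIC comparisons inside $\cA_j$ follow from the same concentration estimate as those against $\cA_{-j}$.
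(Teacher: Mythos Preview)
There is a genuine gap in how you handle the ``no-data'' event $E_{\text{expl}}^c$. You recommend a \emph{fixed} $A_0\in\cA_j$ there and then verify BIC via the aggregate bound
\[
\tfrac{1}{1+\lambda}\cdot\tfrac{\lambda_j}{2}-\tfrac{\lambda}{1+\lambda}\cdot d\ge 0.
\]
But your own ``for all $f$'' reformulation is a per-action statement: taking $f(A_0)=A_0'\in\cA_{-j}$ and $f(A)=A$ for $A\neq A_0$, the required inequality collapses to
\[
\bbE\big[(\theta_{A_0}-\theta_{A_0'})\,1_{\pi=A_0}\mid z=1\big]\ge 0.
\]
The $E_{\text{expl}}$-contribution here is \emph{not} bounded below by $\lambda_j/2$; it is only the single-action padding $\lambda_{A_0}=\min_{A_{-j}}\bbE[(\theta_{A_0}-\theta_{A_{-j}})1_{\pi^*=A_0}]$ that the finite-sample game strategy $\pi^*$ places on $A_0$. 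If $\pi^*$ never recommends $A_0$ (so $\lambda_{A_0}=0$) and $\bbE[\theta_{A_0}]<\bbE[\theta_{A_0'}]$, the displayed quantity is strictly negative and BIC fails at $A_0$. The game value $\lambda_j/2$ controls only the \emph{sum} $\sum_{A}\lambda_{A}$, not any individual term, and your computation silently swaps one for the other. A related loose end: on $E_{\text{expl}}$ the game strategy may ``do nothing'', and you do not say what is recommended then; any choice either violates the ``always in $\cA_j$'' conclusion or further loads cost onto a fixed action.

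The paper's remedy is exactly to avoid a single $A_0$. On the no-data event it samples $A_j$ from $q(A_j)=\lambda_{A_j}/\sum_A\lambda_A$, i.e.\ proportionally to the per-action padding produced by the finite-sample game. Then for every $A_j$ one gets
\[
p\,\lambda_{A_j}-(1-p)\,q(A_j)\,d=\lambda_{A_j}\Big(p-\tfrac{(1-p)d}{\sum_A\lambda_A}\Big)\ge 0
\]
once $p\ge d/(d+\sum_A\lambda_A)$, which is implied by $\lambda=\ulambda/(2d)$. Your finite-sample approximation and your plan for intra-$\cA_j$ comparisons are both reasonable ingredients, but the argument cannot close without this randomized allocation (or some independent guarantee that $\lambda_{A_0}\ge d\lambda$, which is not available in general).
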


\begin{theorem}
    Algorithm~\ref{alg:main} is BIC and almost surely obtains $N$ samples of each atom within $T$ timesteps for
    \begin{equation}
    \label{eq:alg-sample-comp-bound}
    T=
    O\lt(\frac{\msedit{d^{3+\alpha}} N^{1+\alpha}}{\ulambda}\rt).
    \end{equation}
\end{theorem}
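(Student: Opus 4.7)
I would prove both claims by induction on the outer loop index $j \in [d]$, with the inductive hypothesis that after iteration $j$ completes, atoms $a_1,\dots,a_j$ have each been $N$-explored almost surely and every recommendation issued so far has been BIC. Algorithm~\ref{alg:main} makes three kinds of recommendations per iteration: the initial exploit phase (line~\ref{line:alg-first-step}), the standard exploit steps inside the while loop, and the padded phases (line~\ref{line:padded2}). Exploit steps of either kind are automatically BIC, since recommending the greedy Bayes-optimal action conditional on any planner-side signal $S$ satisfies the BIC inequality by the tower property. The initial exploit phase is additionally covered by Lemma~\ref{lem:get-new-action}: conditioned on $\{x_j=1\}$, the Bayes-optimal action contains an atom outside $S_{j-1}$, which by an appropriate ordering of atoms we identify with $a_j$; this both ensures the recommendation is BIC and, via a stability argument for the Bayes-optimal action across the $N$ exploit steps, that $a_j$ is sampled $N$ times on $\{x_j=1\}$.

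The heart of the argument is to show that the padded phases are BIC via Lemma~\ref{lem:BIC-growth} with $\lambda=\ulambda/(2d)$. The two hypotheses to verify are (a) $\{z_j=1\}$ is independent of $\vec\mu$, and (b) $\bbP[\text{the first } j \text{ atoms are all } N\text{-explored} \mid z_j=1]\geq 1/(1+\lambda)$. For (a), $z_j$ is a deterministic function of internally generated Bernoulli variables $b_j$, hence independent of $\vec\mu$. For (b), the inductive hypothesis handles $a_1,\dots,a_{j-1}$ almost surely, reducing the claim to $\bbP[a_j \text{ is }N\text{-explored}\mid z_j=1]\geq 1/(1+\lambda)$. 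Maintaining the invariant $\bbP[y_j=1]=p_j$ together with the implication $\{y_j=1\}\Rightarrow a_j\text{ has been sampled }N\text{ times}$ (from either the initial exploit phase or a prior padded phase), a direct calculation using independence of the freshly drawn $b_j$ yields $\bbP[z_j=1]=p_j(1+\lambda)$ and $\bbP[y_j=1,\,z_j=1]=p_j$, giving the required inequality.

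For the sample complexity, iteration $j$ of the outer loop costs $N$ steps for the initial exploit phase and $N$ per iteration of the while loop. Since each while-loop iteration multiplies $p_j$ by $(1+\lambda)$ starting from $\eps_j$, the loop terminates after $O(\lambda^{-1}\log(1/\eps_j))$ iterations. Substituting $\lambda=\ulambda/(2d)$ and the bound $\log(1/\eps_j)\leq O(d^{1+\alpha} N^\alpha)$ from \eqref{eq:eps-LB}, iteration $j$ costs $O(d^{2+\alpha}N^{1+\alpha}/\ulambda)$ steps; summing over $j\in[d]$ yields \eqref{eq:alg-sample-comp-bound}. Almost-sure $N$-exploration of $a_j$ follows because the loop exits with $p_j=1$: in the final iteration $p_j\geq 1/(1+\lambda)$ forces $b_j=1$ deterministically, so the padded phase runs almost surely in that iteration.

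The main obstacle is bookkeeping the signal $y_j$ through the while loop and verifying the invariant $\bbP[y_j=1]=p_j$ under the implicit update $y_j\leftarrow z_j$ between iterations, together with the property that $\{y_j=1\}$ in fact implies $a_j$ has already been $N$-explored. The latter relies on the padded phase always playing an action in $\cA_j$ for all $N$ of its steps, which is inherited from the construction of $\pi_j$ as an approximately optimal strategy in the $j$-recommendation game.
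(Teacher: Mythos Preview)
Your proposal is correct and follows essentially the same approach as the paper: Lemma~\ref{lem:get-new-action} for the initial exploit phase, Lemma~\ref{lem:BIC-growth} for the padded phases, and the multiplicative growth of $p_j$ together with \eqref{eq:eps-LB} for the sample complexity bound. Your version is in fact more explicit than the paper's proof in tracking the invariant $\bbP[y_j=1]=p_j$, the implicit update $y_j\leftarrow z_j$, and the implication $\{y_j=1\}\Rightarrow a_j$ is $N$-explored; the paper compresses all of this into the remark that only the ``clean'' Bernoulli data is used in the inner loop so that the required independence holds.
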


\begin{proof}
    For the $j$-th of the $d-1$ iterations of the outer loop, in the first step Lemma~\ref{lem:get-new-action} implies that if $x_j=1$, then Line~\ref{line:alg-first-step} explores a new action $a_k$ for $k\geq j$, which is without loss of generality $a_j$.

    For the inner loop, note that we do not consider the event $\ZEROS_{<j,N}$ and only use the ``clean'' data coming from the event $b_j=1$. This prevents the data-dependent trajectory of the algorithm from causing unwanted correlations. In particular Line~\ref{line:padded2} is then BIC by the defining property of $\pi_j$ since the required independence holds. Other steps of the algorithm are BIC by virtue of being exploitation conditionally on some signal.
    
    To complete the analysis, we have $p_j=1$ in the final iteration of the inner loop, and so $b_j=1$ almost surely here. Hence some action in $\cA_i$ is chosen each of these times by definition of $\pi_j$. This holds for each $j$, so $N$ samples of each atom are obtained. For sample complexity, the inner loop requires $O\lt(\frac{\msedit{d}N\log\lt(1/\eps_j\rt)}{\lambda}\rt)$ steps; recalling the bound \eqref{eq:eps-LB} now yields \eqref{eq:alg-sample-comp-bound}. The last line in the algorithm contributes a lower order term $O(dN/\ulambda)$.
\end{proof}


\subsection{Lower Bound}



We now give a sample complexity lower bound for each $a_j$. 
The lower bound is inversely proportional to the minimax value $\olambda_{j}\geq\lambda_{j,\infty}$ of the infinite-sample \textbf{easy $j$-recommendation game} in which all values $(\theta_i)_{i\neq j}$ are known exactly to player $1$.

\begin{proposition}
\label{prop:semibandit-LB}
    For any $j\in [d]$, the sample complexity to almost surely BIC-explore arm $a_j$ is at least $\Omega\lt(\frac{\sigma^2}{d\olambda_j}\rt)-2$.
\end{proposition}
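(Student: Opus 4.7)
My plan is to extract the lower bound by analyzing the first time $\tau \leq T$ at which the algorithm samples $a_j$; such a $\tau$ exists almost surely by the hypothesis that $a_j$ is explored before time $T$, and by definition $A^{(\tau)}\in \cA_j$. Applying the BIC condition at time $\tau$, for any data-adaptive alternative response function $A': \cA_j \to \cA_{-j}$, marginalizing over $A^{(\tau)}$ gives
\[
\bbE\big[\theta_{A^{(\tau)}} - \theta_{A'(A^{(\tau)})}\big]\ \geq\ 0.
\]
Viewing the algorithm's conditional law of $A^{(\tau)}$ given $(\theta_i)_{i\neq j}$ as a (randomized) player-1 strategy and $A'$ as a player-2 strategy, this displays a nonnegative value for a one-shot game in which player~1 only has access to the noisy samples accumulated up to time $\tau-1$, not to the exact values of $(\theta_i)_{i\neq j}$. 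The strategy achieves this by exhibiting an \emph{attainable} value in that game. The goal is then to upper-bound the value of this noisy-information game by $\overline{\lambda}_j$ minus a penalty that scales with the residual posterior variance of the other atoms, and to invert for $\tau$.

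To quantify the noise penalty, I would use a pigeonhole: since $a_j$ has never been sampled and the $\tau-1$ samples are distributed among at most $d-1$ other atoms, there exists $i^\star\neq j$ sampled at most $(\tau-1)/(d-1)$ times. By the law of total variance for bounded random variables (or a direct Beta--Bernoulli computation, using the non-degeneracy assumption $\Var[\theta_{i^\star}]\geq \sigma^2$), the posterior variance of $\theta_{i^\star}$ given the history at time $\tau$ is at least $\Omega\!\big(\sigma^2/(\tau/d + 1)\big)=\Omega(\sigma^2 d/\tau)$ once $\tau \geq 2$ (the ``$-2$'' correction in the bound precisely absorbs the small-$\tau$ edge case).

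I would then construct an agent strategy that exploits this residual uncertainty: regardless of what $A^{(\tau)}\in\cA_j$ the planner recommends, the agent considers the alternative in $\cA_{-j}$ obtained by toggling membership of $a_{i^\star}$ in a direction that uses up the planner's budget. Because the planner's recommendation is a function of the history alone and $a_{i^\star}$'s residual posterior is approximately independent of that history up to an $\Omega(\sigma^2 d/\tau)$ gap, this response strategy extracts at least a $\Omega(\sigma^2 d /\tau)$ advantage beyond what the agent could do in the fully-informed easy game. Comparing against the easy-game value $\overline{\lambda}_j$ therefore yields
\[
0\ \leq\ \overline{\lambda}_j\ -\ \Omega(\sigma^2 d/\tau),
\]
which rearranges to $\tau \geq \Omega(\sigma^2/(d\,\overline{\lambda}_j)) - 2$, as claimed.

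The main obstacle is making the second step quantitatively tight with a \emph{linear} (rather than square-root) dependence on variance. The usual concentration-based intuition gives $\sigma/\sqrt{\tau}$, which is the wrong rate; obtaining the sharper $\sigma^2/\tau$ scaling requires interpreting the noise penalty as a \emph{second-moment} quantity (a variance gap), not a first-moment deviation. I expect this to come from a conditional-Jensen or law-of-total-variance inequality applied to $\theta_{i^\star}$, together with the observation that the agent's best response is a piecewise linear function of the posterior mean whose subdifferential is $O(1)$. A careful reduction to a one-dimensional signal-detection subproblem for $\theta_{i^\star}$ should yield the required variance-type bound; this is where I anticipate the technical work to concentrate.
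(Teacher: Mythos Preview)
Your proposal has a genuine conceptual gap: you are extracting variance from the wrong atom. At the stopping time $\tau$, the atom $a_j$ itself has been sampled \emph{zero} times, so $\theta_j$ retains its full prior variance $\sigma^2$ and is independent of the history and of $A^{(\tau)}$. Since every $A\in\cA_j$ contains $a_j$ while every $A'\in\cA_{-j}$ does not, the random variable $\theta_{A^{(\tau)}}-\theta_{A'(A^{(\tau)})}$ carries $\theta_j$ with coefficient $+1$, and this is the source of the $\sigma^2$ term. The pigeonhole on some other $i^\star\neq j$ is unnecessary and leads you into trouble: the agent in the $j$-recommendation game observes only the recommended action, not the history, so she has no mechanism to ``toggle $a_{i^\star}$ in the right direction'' or otherwise exploit the planner's residual uncertainty about $\theta_{i^\star}$. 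Moreover, for an arbitrary action family $\cA$ there is no guarantee that toggling membership of $a_{i^\star}$ even yields an element of $\cA_{-j}$.

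There is a second structural gap. You attempt to apply BIC only at the random time $\tau$ and compare to $\olambda_j$ in a single shot, hoping for an inequality of the form $0\le \olambda_j-\text{penalty}$. But $\olambda_j$ is the value of a game in which the planner may \emph{decline to recommend}; the forced recommendation at time $\tau$ does not by itself make the value drop below $\olambda_j$ in a way that depends on $\tau$. The paper instead compares an obedient agent against a fixed ``$j$-avoiding'' agent who always plays the easy-game minimax response whenever some $A\in\cA_j$ is recommended. Summing BIC over all $T$ rounds shows the obedient agent does at least as well in total; on the other hand the $j$-avoiding agent loses at most $\olambda_j$ per round, while at the first time $t_j$ she gains because $\bbE|X|\ge \bbE[X^2]/d\ge \sigma^2/d$ for $X=\theta_{A^{(t_j)}}-\bbE[\theta_{A_{-j}}\mid A^{(t_j)}]$ (the second moment coming from $\theta_j$), and $\bbE[X_+]\le\olambda_j$. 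This aggregate accounting is what produces the $T\cdot\olambda_j$ term and hence the claimed bound; a purely single-step argument at $\tau$ does not.
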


The question of which values $\theta_i$ should be treated as known when proving a lower bound on $a_j$ leads to a mismatch with our upper bound, i.e. we may have $\ulambda<\olambda$ in general. This issue was not present in the multi-armed bandit case of \cite{sellke-slivkins}; without combinatorial action sets, one can actually explore the arms in decreasing order of prior mean. This was justified using the FKG inequality, see Appendix B and Definition 3 therein. However with combinatorial action sets there is no canonical ordering of atoms, which leads to the mismatch above.

\subsection*{Acknowledgement}

Thanks to Anand Kalvit for bringing \cite[Exercise 20.2]{LS19bandit-book} to our attention, and thus correcting an error in the regret bounds for linear bandits. Thanks also to Ben Schiffer for pointing out a number of typos in the second counterexample.

{
\footnotesize
\bibliographystyle{alpha}
\bibliography{bib}
}

\normalsize

\appendix

\section{Proofs for Section~\ref{sec:linear}}

\msedit{
\begin{proof}[Proof of Lemma~\ref{eq:subgaussian-book}]
    The mean zero property just says $\bbE[\bbE^t[\ell^*]]=\bbE[\ell^*]$, which follows by the tower rule for conditional expectations. 
    Theorem 20.5 of \cite{LS19bandit-book} implies that for any unit vector $v\in\bbR^d$, there exists a regularized least squares estimator $\hat\ell^*$ such that $\ell^*-\hat\ell^*$
    is $O(\sqrt{d\log(t)/\gamma})$-subgaussian. From Lemma~\ref{lem:freqguarantee} we find that $\ell^*-\tilde\ell^*$
    is also $O(\sqrt{d\log(t)/\gamma})$-subgaussian, for $\tilde\ell^*$ a posterior sample for $\ell^*$ at time $t$. The result now follows since the sum of two arbitrarily coupled $O(\sqrt{d\log(t)/\gamma})$-subgaussian vectors is still $O(\sqrt{d\log(t)/\gamma})$-subgaussian (by applying the AM-GM inequality to \eqref{eq:subgaussian}).
\end{proof}

\begin{proof}[Proof of Lemma~\ref{lem:width}]
    By Theorem 4.1 of \cite{kannan1995isoperimetric}, 
    \[
    \sqrt{\Var_{\ell\sim\cU(\cK)}\langle \ell,v\rangle}
    \geq
    \Omega(\width_{v}(\cK)/d).
    \]
    Since $\cU(\cK)$ is a log-concave distribution, Theorem 5.1 of
    \cite{guedon2014concentration} implies
    \[
    \bbE^{\ell\sim\cU(\cK)}|\langle \ell,v\rangle|\geq \Omega\lt(\sqrt{\Var_{\ell\sim\cU(\cK)}\langle \ell,v\rangle}\rt)
    .
    \]
    Combining yields the desired estimate.
\end{proof}
}

\subsection{Proofs for Generalized Linear Bandit}

As explained in \cite{GeneralizedLinear-nips10}, the maximum likelihood estimator (MLE) $\hat\ell$ at time $t$ is unique, and is in fact the solution to
\begin{equation}
\label{eq:GLM-MLE}
    \sum_{s=1}^{t}
    \big(R_s-\chi(\langle A_s,\hat\ell\rangle)\big) A_s=0
    .
\end{equation}

The next result essentially shows that deterministic $\gamma$-spectral exploration for $\gamma\geq \frac{C M_{\chi}^2 d^2}{m_{\chi}^4}$ suffices for the estimation error of the MLE to be subgaussian.

\begin{theorem}[{Theorem 1 of \cite{li2017provably}}]
\label{thm:GLM-MLE-subgaussian}
Suppose a deterministic initial exploration sequence yields $\gamma$-spectral exploration for
\[
    \gamma\geq \frac{C M_{\chi}^2}{m_{\chi}^4}\lt(d^2+\log(1/\delta)\rt).
\]
Then the MLE estimate $\hat\ell$ has error $(\hat\ell-\ell^*)$ satisfying for any fixed $v\in\bbR^d$:
\[
    \bbP\lt[
    |\langle \hat\ell-\ell^*,v\rangle|
    \leq
    \frac{\|v\|}{m_{\chi}\sqrt{\gamma}}
    \rt]
    \geq 1-\delta.
\]
\end{theorem}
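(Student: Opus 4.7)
The plan is to invert the MLE first-order condition \eqref{eq:GLM-MLE} to linearize $\hat\ell-\ell^*$ in the noise, then use strong convexity of the GLM log-likelihood to control the induced curvature, and finally apply a scalar subgaussian tail bound to a fixed one-dimensional projection. Writing $R_s=\chi(\langle A_s,\ell^*\rangle)+\eta_s$ with $\eta_s$ conditionally $O(1)$-subgaussian and subtracting the deterministic part, \eqref{eq:GLM-MLE} becomes $\sum_{s\leq t}\bigl(\chi(\langle A_s,\hat\ell\rangle)-\chi(\langle A_s,\ell^*\rangle)\bigr) A_s=\sum_{s\leq t}\eta_s A_s$. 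Applying the mean value theorem to $\chi$ on each summand rewrites the left-hand side as $G(\hat\ell-\ell^*)$ where $G=\sum_{s\leq t}\chi'(u_s)\, A_s A_s^T$ and $u_s$ lies on the segment between $\langle A_s,\ell^*\rangle$ and $\langle A_s,\hat\ell\rangle$, giving the identity $\hat\ell-\ell^*=G^{-1}\sum_{s\leq t}\eta_s A_s$.

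Next I would show $G\succeq m_\chi\gamma I_d$. Since $\chi'(u)\geq m_\chi$ for $u\in[-1,1]$ and $\sum_{s\leq t}A_s A_s^T\succeq\gamma I_d$ by \eqref{eq:spectral-explore}, this bound is immediate provided every $u_s\in[-1,1]$, i.e. provided $\hat\ell$ remains in the admissible region where $|\langle A_s,\hat\ell\rangle|\leq 1$ for each $s$. Certifying this is the main obstacle, which I would handle via a bootstrap: the negative GLM log-likelihood has Hessian $\sum_{s\leq t}\chi'(\langle A_s,\ell\rangle) A_s A_s^T\succeq m_\chi\gamma I_d$ on the admissible region, so it is strongly convex there, and its unique stationary point $\hat\ell$ cannot escape a small ball around $\ell^*$. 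The $d^2$ in the hypothesis on $\gamma$ enters precisely here, via a uniform covering argument over the $d$-dimensional parameter ball that is needed to close the bootstrap.

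Finally, for the fixed vector $v$, I would condition on the high-probability event that the bootstrap succeeds and project the linear identity to obtain $\langle v,\hat\ell-\ell^*\rangle=\sum_{s\leq t}\eta_s\langle G^{-1}v,A_s\rangle$. On the good event, $G$ is close to the deterministic matrix $\sum_{s\leq t}\chi'(\langle A_s,\ell^*\rangle) A_s A_s^T$, so $G^{-1}v$ may be treated as essentially deterministic and the scalar above is mean-zero subgaussian with variance proxy $\lesssim\|G^{-1}v\|_V^2\leq\|v\|_{G^{-1}}^2/m_\chi\leq\|v\|^2/(m_\chi^2\gamma)$, where $V=\sum_{s\leq t}A_s A_s^T$. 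A standard scalar Gaussian tail bound then yields $|\langle v,\hat\ell-\ell^*\rangle|\leq\|v\|\sqrt{\log(1/\delta)}/(m_\chi\sqrt{\gamma})$ with probability $1-\delta$, and the $\log(1/\delta)$ overhead is absorbed into the stated hypothesis $\gamma\gtrsim\log(1/\delta)/m_\chi^4$. The hardest step is the bootstrap, specifically decoupling the randomness of $G$ from the noise so that the scalar concentration in the final step can be applied as if $G$ were a fixed matrix.
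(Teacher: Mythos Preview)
The paper does not prove this statement at all: it is quoted verbatim as Theorem~1 of \cite{li2017provably} and used as a black box in the proof of Theorem~\ref{thm:main-GLM}. So there is no ``paper's own proof'' to compare against, and strictly speaking no proof is expected of you here.

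That said, your sketch is a reasonable outline of the standard argument for this type of GLM estimation bound, and is broadly in the spirit of the cited source. The linearization $G(\hat\ell-\ell^*)=\sum_s\eta_s A_s$ via the mean value theorem is correct, and you correctly identify the bootstrap/localization step (ensuring $\hat\ell$ stays in the region where $\chi'\geq m_\chi$) as the crux and the origin of the $d^2$ term. One point to be careful about in a full proof: in your final step you treat $G^{-1}v$ as ``essentially deterministic'' to apply scalar subgaussian concentration, but $G$ depends on $\hat\ell$ through the intermediate points $u_s$, which in turn depend on the noise $(\eta_s)$. The cleaner route, used in \cite{li2017provably}, is to first bound $\|\hat\ell-\ell^*\|_V$ (with $V=\sum_s A_s A_s^T$) via a self-normalized martingale bound on $\|\sum_s\eta_s A_s\|_{V^{-1}}$ combined with $G\succeq m_\chi V$, and only then read off the one-dimensional projection via Cauchy--Schwarz in the $V$-norm; this avoids having to decouple $G$ from the noise.
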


The following result follows directly from Corollary~\ref{cor:linear-margin}.

\begin{lemma}
\label{lem:GLM-margin}
    Fix a link function $\chi$. Suppose that $\cA\subseteq\bbS^{d-1}$ is $\eps$-separated, the convex set $\cK\subseteq \bbR^d$ is $r$-regular, and $\ell^*\sim\mu=\cU(\cK)$. Then for each $A_i,A_j\in\cA$, we have
    \begin{align*}
    \bbP[A_i=A^*(\ell^*)]
    &\geq
    \lt(\frac{\eps}{2}\rt)^d,
    \\
    \bbE\big[\chi(\langle \ell^*,A_i\rangle)-\chi(\langle \ell^*,A_j\rangle)~|~A^*=A_i\big]
    & \geq
    \Omega(r\eps \|A_i-A_j\| m_{\chi}/d).
    \end{align*}
\end{lemma}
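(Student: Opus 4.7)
The plan is to reduce each bound to its linear counterpart in Corollary~\ref{cor:linear-margin}, exploiting two facts: (i) the argmax $A^*(\ell^*)$ depends only on the linear functional $\ell \mapsto \langle \ell, \cdot \rangle$ because $\chi$ is strictly monotone, and (ii) the GLM reward gap is controlled pointwise by the linear reward gap via the uniform derivative bound $\chi' \geq m_\chi$ on $[-1,1]$.

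For the first inequality, I would observe that strict monotonicity of $\chi$ implies
\[
\argmax_{A \in \cA} \chi(\langle \ell, A\rangle) = \argmax_{A\in\cA}\langle \ell, A\rangle
\]
for every $\ell \in \bbR^d$. Hence the random variable $A^*(\ell^*)$ is identical in the GLM and linear models, and \eqref{eq:deltai-LB} of Corollary~\ref{cor:linear-margin} transfers verbatim (with constants adjusted as appropriate for the advertised $(\eps/2)^d$ form).

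For the second inequality, the key step is a deterministic lower bound on the GLM gap before taking conditional expectation. The paper's normalization guarantees $\langle \ell^*, A\rangle \in [-1, 1]$ for every $\ell^* \in \supp(\mu) = \cK$ and every $A \in \cA$. Conditioning on the event $\{A^* = A_i\}$, we have $\langle \ell^*, A_i\rangle \geq \langle \ell^*, A_j\rangle$ with both values in $[-1,1]$. The mean value theorem then gives
\[
\chi(\langle \ell^*, A_i\rangle) - \chi(\langle \ell^*, A_j\rangle) = \chi'(\zeta)\,\langle \ell^*, A_i - A_j\rangle \geq m_\chi\,\langle \ell^*, A_i - A_j\rangle
\]
pointwise on the event, for some $\zeta \in [\langle \ell^*, A_j\rangle, \langle \ell^*, A_i\rangle] \subseteq [-1,1]$; the inequality preserves sign because the linear factor $\langle \ell^*, A_i - A_j\rangle$ is nonnegative on the conditioning event. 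Taking conditional expectation given $A^* = A_i$ and applying \eqref{eq:main-technical-linear-bandit} of Corollary~\ref{cor:linear-margin} produces the advertised $\Omega\!\left(m_\chi r \eps \|A_i - A_j\|/d\right)$ bound.

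No serious obstacle is anticipated; the entire argument is a one-line mean value theorem reduction built on top of Corollary~\ref{cor:linear-margin}. The only point that requires genuine verification is that the interpolating point $\zeta$ stays inside $[-1,1]$ so that $\chi'(\zeta) \geq m_\chi$ is legitimately invoked, which follows immediately from the standing assumption $\langle \ell^*, A \rangle \in [-1,1]$ on $\supp(\mu) \times \cA$.
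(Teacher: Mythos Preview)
Your reduction is correct and is exactly what the paper intends: it states that Lemma~\ref{lem:GLM-margin} ``follows directly from Corollary~\ref{cor:linear-margin}'' without further elaboration, and your two observations (strict monotonicity preserves $A^*$, and the mean value theorem converts the $\chi$-gap into $m_\chi$ times the linear gap on the event $\{A^*=A_i\}$) are precisely the missing one-line justification. The only caveat is cosmetic: Corollary~\ref{cor:linear-margin} actually gives $(r\eps/4)^d$ rather than the $(\eps/2)^d$ stated in Lemma~\ref{lem:GLM-margin}, and indeed the proof of Theorem~\ref{thm:main-GLM} uses the former bound; this is an inconsistency in the paper's statement, not in your argument.
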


We now extend Theorem~\ref{thm:main-linear-bandit} to the generalized linear bandit model.

\begin{proof}[Proof of Theorem~\ref{thm:main-GLM}]
As before it suffices to show that 
\[
    \bbE\Big[1_{A^*=A_i}\cdot \bbE^{t}\big[\chi(\langle \ell^*,A_i\rangle)-\chi(\langle \ell^*,A_j\rangle)\big]\Big]\geq 0
\]
for any action $A_j\neq A_i$. Defining $\delta_i=
    \bbP[A_i=A^*(\ell^*)]
    \stackrel{\eqref{eq:deltai-LB}}{\geq}
    \lt(\msedit{\frac{r\eps}{4}}\rt)^d$, we have
\begin{align*}
    \bbE\Big[1_{A^*=A_i}\cdot \big[\chi(\langle \ell^*,A_i\rangle)-\chi(\langle \ell^*,A_j\rangle)\big]\Big]
    &=
     \bbE\big[1_{A^*=A_i}\cdot \big[\chi(\langle \ell^*,A_i\rangle)-\chi(\langle \ell^*,A_j\rangle)\big]_+\big]
     \\
     &=
     \delta_i\cdot
     \bbE\big[\big[\chi(\langle \ell^*,A_i\rangle)-\chi(\langle \ell^*,A_j\rangle)\big]_+~|~A^*=A_i\big]
     \\
     &\stackrel{Lem.~\ref{lem:width}}{\geq}
     \Omega\lt(\frac{\delta_i r\eps m_{\chi} \|A_i-A_j\|}{d}\rt).
\end{align*}
Denote by $\chi_i=\chi(\langle \ell^*,A_i\rangle)$ the expected reward for action $A_i$. It remains to upper-bound
\[
    \bbE\lt[
    1_{A^*=A_i}
    \cdot
    \lt|
    \bbE^{t}
    [\chi_i-\chi_j]
    -(\chi_i-\chi_j)
    \rt|
    \rt].
\]
Note that the condition of Theorem~\ref{thm:GLM-MLE-subgaussian} holds for $\delta=e^{-d^2}$.
By Theorem~\ref{thm:GLM-MLE-subgaussian} and Lemma~\ref{lem:freqguarantee}, the value $\frac{m_{\chi}\gamma^{1/2}\big(\bbE^{t}[\chi_i-\chi_j]-(\chi_i-\chi_j)\big)}{\|A_i-A_j\|}$ is centered and uniformly bounded, up to modification on an event with probability $e^{-\Omega(d^2)}$.
Using Lemma~\ref{lem:subgaussian-tail} and the fact that $\chi$ is $M_{\chi}$-Lipschitz, we find
\begin{align*}
    \bbE\lt[
    1_{A^*=A_i}
    \cdot
    \lt|
    \bbE^{t}
    [\chi_i-\chi_j]
    -(\chi_i-\chi_j)
    \rt|
    \rt]
    \leq
    O\lt(
    \frac{\delta_i M_{\chi}  \|A_i-A_j\|}{m_{\chi}}
    \sqrt{\frac{\log(1/\delta_i)}{\gamma}}+e^{-d^2}\rt).
\end{align*}
Since we assumed $\gamma\geq \frac{C M_{\chi}^2 d^3 \log(4/r\eps)}{r^2 m_{\chi}^4 \eps^2}$
for a large enough absolute constant $C$, we conclude via:
\begin{align*}
    \bbE\lt[
    1_{A^*=A_i}
    \cdot
    \lt|
    \bbE^{t}
    [\chi_i-\chi_j]
    -(\chi_i-\chi_j)
    \rt|
    \rt]
    &\leq
    O\lt(\frac{\delta_i M_{\chi}
    \|A_i-A_j\|}{m_{\chi}}
    \sqrt{\frac{\log(1/\delta_i)}{\gamma}}+e^{-d^2}\rt)
    \\
    &\leq
    \Omega\lt(\frac{\delta_i r\eps m_{\chi} \|A_i-A_j\|}{d}+e^{-d^2}\rt)
    \\
    &\leq
    \bbE\Big[1_{A^*=A_i}\cdot \big[\chi(\langle \ell^*,A_i\rangle)-\chi(\langle \ell^*,A_j\rangle)\big]\Big]
\end{align*}
In the second step we used $\log(1/\delta_i)\leq d\log(4/r\eps)$ which follows from \eqref{eq:deltai-LB}.
At the end we used the assumption that $r,\eps,m_{\chi}\geq e^{-d}$ to absorb the $e^{-d^2}$ additive term.
\end{proof}

\section{Proofs for Counterexamples}

\begin{proof}[Proof of Proposition~\ref{prop:counterexample}]
    We show that \eqref{eq:counterexample} holds conditionally on $A^{(1)}\neq A_3$, and that equality holds conditionally on $A^{(1)}=A_3$. Together these imply the result. 
    
    First if $A^{(1)}=A_1$ or $A^{(2)}=A_2$, then at time $2$ we have learned the first coordinate of $\ell^*$. So in this case, if $\bbP^2[A^*=A_2]$ then $\bbE^2[\ell^*=(\mu_1,0)]$ for some $\mu_1>0$ almost surely. In particular, on this event 
    \[
        \bbE^2[\langle \ell^*,A_3-A_1\rangle]>0
    \]
    so we conclude that 
    \begin{align*}
    \bbE[\mu_3~|~(A^{(1)},A^{(2)})=(A_1,A_1)]&>\bbE[\mu_1~|~(A^{(1)},A^{(2)})=(A_1,A_1)],
    \\
    \bbE[\mu_3~|~(A^{(1)},A^{(2)})=(A_2,A_1)]&>\bbE[\mu_1~|~(A^{(1)},A^{(2)})=(A_2,A_1)].
    \end{align*}
    Next if $A^{(1)}=A_3$, then at time $2$ we have learned the value $\mu_3=\langle \ell^*,A_3\rangle$. We claim for any $\mu_3$,
    \begin{equation}
    \label{eq:isoceles}
    \bbP^2[A^*=A_1~|~\mu_3]=\bbP^2[A^*=A_1~|~-\mu_3].
    \end{equation}
    Indeed it is easy to see that $A^*=A_1$ if and only if $\ell^*$ lies in the angles spanned by the outward normal cone to $A_1$ for the triangle $A_1A_2A_3$. In other words, we must have
    \[
    \arg(\ell^*)\in \big[-\pi/2,\tan^{-1}(-4/3)\big].
    \]
    By construction, $\|A_1\|=\|A_1-A_3\|$ and so the angle bisector to this normal cone is orthogonal to $A_3$ by elementary geometry. It is easy to see that \eqref{eq:isoceles} now follows, and implies 
    \begin{align*}
    \bbE[\mu_3~|~(A^{(1)},A^{(2)})=(A_3,A_1)]
    =
    \bbE[\mu_1~|~(A^{(1)},A^{(2)})=(A_3,A_1)].
    \end{align*}
    Averaging over the conditioning on $A^{(1)}$ now yields the result (since $\bbP[A^{(1)}=A_1]>0$, say).
\end{proof}

\subsection{Exponential Lower Bound for Initial Exploration}

First, we observe that setting $\cA=\cE(\cP)$ to be the set of extreme points of $\cP$ is essentially equivalent to taking $\cA=\cP$ itself. The proof is easily seen to generalize to any convex polytope. This means that our lower bound below applies to the classes of convex action sets as well as separated action sets.

\begin{proposition}
\label{prop:extreme-points-reduction}
    Suppose there exists a BIC algorithm $\bbA$ which explores in $\cP$ and achieves $\gamma$-spectral exploration almost surely within $T$ timesteps. Then there exists a BIC algorithm $\bbA'$ which explores in $\cA=\cE(\cP)$ and achieves $\gamma$-spectral exploration almost surely within $Td$ timesteps.
\end{proposition}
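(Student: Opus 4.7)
The plan is to simulate $\bbA$ using only extreme points of $\cP$ via Carath\'eodory's theorem. For each $A\in\cP$, fix once and for all a convex decomposition $A = \sum_{i=1}^{d+1} \lambda_i(A)\, v_i(A)$ with $v_i(A)\in\cE(\cP)$, $\lambda_i(A)\geq 0$, and $\sum_i \lambda_i(A)=1$; such a decomposition exists since $\cP\subseteq\bbR^d$ has affine dimension at most $d$. The slight mismatch between $d+1$ and $d$ will be absorbed into the $O(Td)$ bound.

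I would construct $\bbA'$ by running $\bbA$ as an internal simulation. Whenever $\bbA$ requests a reward for a recommendation $A^{(s)}\in\cP$, $\bbA'$ spends $d+1$ real rounds playing the decomposition vertices $v_1(A^{(s)}),\ldots,v_{d+1}(A^{(s)})$ in an order $\pi_s$ to be specified, observes rewards $r_1,\ldots,r_{d+1}$, and feeds the weighted average $\hat r := \sum_i \lambda_i(A^{(s)})\, r_i$ back to $\bbA$ as the simulated reward for $A^{(s)}$. Since $\bbE[\hat r \mid \ell^*,A^{(s)}] = \langle \ell^*, A^{(s)}\rangle$ and $\hat r - \langle \ell^*,A^{(s)}\rangle$ is a convex combination of $O(1)$-subgaussian noises (hence itself $O(1)$-subgaussian), $\bbA$'s internal trajectory is distributed as in a genuine run on $\cP$.

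Spectral exploration transfers by a PSD inequality. Jensen's inequality for the PSD-convex map $v\mapsto vv^T$ gives
\[
(A^{(s)})^{\otimes 2} \preceq \sum_{i=1}^{d+1} \lambda_i(A^{(s)}) v_i(A^{(s)})^{\otimes 2} \preceq \sum_{i=1}^{d+1} v_i(A^{(s)})^{\otimes 2}.
\]
Summing over $s\in[T]$, we find that $\sum_{s'\leq T(d+1)} (\tilde A^{(s')})^{\otimes 2}\succeq \sum_{s\leq T}(A^{(s)})^{\otimes 2}\succeq \gamma I_d$, so $\bbA'$ achieves $\gamma$-spectral exploration in $T(d+1)=O(Td)$ rounds, matching the claim.

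The hard part is showing BIC. At sub-round $s'=(s-1)(d+1)+i$ the agent sees $\tilde A^{(s')}=v$ and must prefer $v$ to every alternative $v'\in\cE(\cP)$. Decomposing
\[
\bbE[\mu(v)-\mu(v')\mid \tilde A^{(s')}=v] = \bbE[\mu(v)-\mu(A^{(s)})\mid \tilde A^{(s')}=v] + \bbE[\mu(A^{(s)})-\mu(v')\mid \tilde A^{(s')}=v],
\]
the second term is $\geq 0$ by iterated expectation over the posterior of $A^{(s)}$ together with $\bbA$'s BIC (using that the ordering randomness is independent of $\ell^*$ given $A^{(s)}$). The first term is the delicate one, as it may be negative when a decomposition vertex underperforms the $\cP$-action it reconstructs. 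My plan is to take $\pi_s$ to order the vertices by descending posterior mean $\bbE[\mu(v_i(A^{(s)}))\mid \mathcal{F}_s]$, using $\bbA$'s internal filtration $\mathcal{F}_s$; linearity gives $\max_i \bbE[\mu(v_i)\mid\mathcal{F}_s]\geq \bbE[\mu(A^{(s)})\mid \mathcal{F}_s]$, so the best vertex inherits BIC slack directly from $\bbA$. Transferring this slack to the lower-ranked sub-rounds while preserving the spectral guarantee is the central technical obstacle, and I expect the argument to require an additional small randomization (or an explicit pooling of the BIC surplus across sub-rounds) to close the gap for every position $i\in[d+1]$.
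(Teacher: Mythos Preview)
Your overall construction matches the paper's almost exactly: simulate $\bbA$ via a Carath\'eodory decomposition of each recommended point, feed $\bbA$ a synthetic reward (the paper selects one $r_i$ with probability $\lambda_i$ rather than your weighted average, but either produces the correct conditional law), and use Jensen for the spectral comparison. The gap is in your BIC analysis, and it is a missing observation rather than a missing technique.

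You flag $\bbE[\mu(v)-\mu(A^{(s)})\mid \tilde A^{(s')}=v]$ as ``the delicate one'' and leave it open. In fact this term is identically zero, and seeing why is the entire BIC argument in the paper. Condition on $A^{(s)}=A$. BIC of $\bbA$ over the action set $\cP$ says the linear functional $w\mapsto\langle\bbE[\ell^*\mid A^{(s)}=A],w\rangle$ is maximized over all of $\cP$ at $w=A$. But $A=\sum_i\lambda_i(A)\,v_i(A)$, so the value of this functional at $A$ is the $\lambda$-weighted average of its values at the $v_i(A)$. A weighted average of numbers each at most the maximum can equal the maximum only if every term with positive weight attains it. Hence every decomposition vertex $v_i(A)$ is itself conditionally optimal over $\cP$, so $\bbE[\mu(v_i(A))-\mu(v')\mid A^{(s)}=A]\geq 0$ for every $v'\in\cE(\cP)\subseteq\cP$. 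With the decomposition and the within-block order fixed as deterministic functions of $A^{(s)}$ (as you already set them up), the event $\{\tilde A^{(s')}=v\}$ is $\sigma(A^{(s)})$-measurable, and averaging over the compatible values of $A^{(s)}$ finishes the proof. No posterior-mean ordering, extra randomization, or surplus-pooling is needed.

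Indeed, introducing the $\cF_s$-dependent ordering you propose would be actively harmful: it makes $\{\tilde A^{(s')}=v\}$ carry information in $\cF_s$ beyond $A^{(s)}$, at which point your second-term appeal to BIC of $\bbA$ (which is only stated conditional on $A^{(s)}$) no longer applies cleanly. Keep the fixed ordering and use the linearity observation above.
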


\begin{proof}
    We show how to simulate each step of $\bbA$ using exactly $d$ steps of $\bbA'$, in a BIC way. For each $t\in \{0,1,\dots,T-1\}$, at the start of time-step $td+1$ we will have a simulated version of $\bbA$ which has received some artifical feedback and recommends an action $A^{(t)}\in \cP$. To construct $\bbA'$ we do the following:
    \begin{enumerate}
    \item Write $A^{(t)}=\sum_{i=1}^d p_{t,i}\wt A_{t,i}$ as a convex combination of $d$ not-necessarily-distinct $\wt A_{t,i}\in\cA$, with $p_{t,i}\geq 0$ and $\sum_{i=1}^d p_{t,i}=1$. 
    \item At timestep $td+i$ for $i\in \{1,2,\dots,d\}$, play action $\wt A_{t,i}$ and receive reward $\wt r_{t,i}\in \{0,1\}$.
    \item After timestep $(t+1)d$, choose $r_t=\wt r_{t,i}$ with probability $p_{t,i}$ independently of the past, and take this as the feedback for $\cA$.
    \end{enumerate}
    It is easy to see that $r_t$ has the correct expected value by linearity of rewards for the linear bandit problem. It then follows that after receiving $(r_1,\dots,r_{t-1})$, the recommendation $A^{(t)}\in\cP$ is BIC within $\cP$ by the BIC property of $\bbA$. By linearity it follows that conditioned on observing $A^{(t)}$, each extreme point $\wt A_{t,i}$ is also a BIC recommendation, since they all must have the same conditional mean reward. 
    
    It remains to show that $\bbA'$ also achieves $\gamma$-spectral exploration. Indeed
    \[
    \sum_{i=1}^d p_{t,i}\wt A_{t,i}^{\otimes 2}\succeq (A^{(t)})^{\otimes 2}
    \]
    follows by testing against any $v^{\otimes 2}$ and using Cauchy-Schwarz. This completes the proof.
\end{proof}

Based on Proposition~\ref{prop:extreme-points-reduction}, we state Proposition~\ref{prop:linear-bandit-LB} below in the discrete action set setting $\cA=\cE(\cP)$ and an assumption of $0.01$-BIC, with the understanding that it extends also to the convex action set $\cA=\cP$ for BIC algorithms. Note that for $\cP$ as above, all extreme points satisfy $x_d=0$ except for $\pm\hat A = \lt(0,0,\dots,0,\pm \frac{1}{10}\rt)$, and so $\gamma$-spectral exploration for any $\gamma>0$ requires exploring these actions. Our prior distribution $\ell^*\sim\mu$ will be uniform over the ``biased'' convex body $\cK/\sqrt{d}$ for
\[
    \cK
    \equiv
    [-0.5,1]^{d-1} \times [-1,1].
\]

\begin{proof}[Proof of Proposition~\ref{prop:linear-bandit-LB}]
    We set $\eps=0.01$.
    Consider a modification $\bbA'$ of $\bbA$ which changes all plays of $\pm\hat A$ to the prior-optimal action $A_1=\frac{(1,1,\dots,1,0)}{2\sqrt d}$.
    If $\bbA'$ and $\bbA$ play different actions at $m$ times in expectation, then by definition the expected total rewards satisfy
    \begin{equation}
    \label{eq:eps-BIC-aggregate}
    R_T(\bbA)+\eps m\geq R_T(\bbA').
    \end{equation}
    Recall that one of $\pm\hat A$ must be explored at least once regardless of $\ell^*$. Moreover at the first such modification time (denoted by $\tau$),
    \[
    \bbE^{\tau}[\langle \ell^*,\hat A\rangle]
    =
    \bbE^{\tau}[\langle \ell^*,-\hat A\rangle]
    =0
    .
    \]
    This implies 
    \begin{align*}
    \bbE[r_{\tau}(\bbA)+\eps - r_{\tau}(\bbA')]
    \leq 
    \eps-\bbE[\langle \ell^*,A_1\rangle]
    \leq \eps-\frac{d-1}{8d}
    \leq -1/9.
    \end{align*}
    Moreover $\bbA'$ makes at most $T-1$ additional modifications; at each time $s$ that such a modification occurs, 
    \[
    \bbE^s[r_{s}(\bbA)+\eps - r_{s}(\bbA')]
    \leq 
    \Big(
    |\bbE^s[\langle \ell^*,\hat A\rangle]|
    +\eps
    -
    \bbE^s[\langle \ell^*,A_1\rangle]
    \Big)_+.
    \]
    Summing over times $s$ and applying Jensen in the second inequality, we find
    \begin{align*}
    R_T(\bbA)+\eps m- R_T(\bbA')
    &\leq
    -1/9 
    +
    \sum_{s=1}^T
     \bbE\lt[
    \Big(
    |\bbE^s[\langle \ell^*,\hat A\rangle]|
    +\eps
    -
    \bbE^s[\langle \ell^*,A_1\rangle]
    \Big)_+
    ~\rt]
    \\
    &\leq
    -1/9
    +
    T\cdot 
    \bbE\lt[
    \Big(
    |\langle \ell^*,\hat A\rangle|
    +\eps
    -
    \langle \ell^*,A_1\rangle
    \Big)_+
    ~\rt]
    .
    \end{align*}
    A simple Chernoff estimate shows that 
    \begin{equation}
    \label{eq:exp-LB}
    \begin{aligned}
    \bbE\lt[
    \Big(
    |\langle \ell^*,\hat A\rangle|
    +\eps
    -
    \langle \ell^*,A_1\rangle
    \Big)_+
    ~\rt]
    \leq
    \bbE\lt[
    \Big(
    \frac{1}{10}
    -
    \langle \ell^*,A_1\rangle
    \Big)_+
    ~\rt]
    \leq 
    e^{-\Omega(d)}.
    \end{aligned}
    \end{equation}
    Recalling \eqref{eq:eps-BIC-aggregate}, we conclude $T\geq e^{\Omega(d)}$ as desired.
\end{proof}
\balance

\section{$j$-Recommendation Game}

We first formally define the $j$-recommendation game. We proceed more generally than in the main body, defining it relative to any $d$-tuple $(N_1,\dots,N_d)$. We recall the definition of the static $\sigma$-algebra $\mathcal G_{N_1,\dots,N_d}$ which is generated by $N_i$ samples of each atom $i$. When considering arm $j$, we will always have $N_k=0$ for all $k>j$. If $N_i=N$ for all $i$ we recover the $(j,N)$-recommendation game as defined in the main body. The definition of $(j,N)$-informed generalizes readily to $(j,\mathcal G)$-informed.

\begin{definition}
The \emph{$j$-recommendation game} is a two-player zero sum game played between a \emph{planner} and an \emph{agent}. The players share a common independent prior over the true mean rewards $(\theta_i)_{i\leq j}$, and the planner gains access to the static $\sigma$-algebra $\mathcal G=\mathcal G_{(N_1,\dots,N_j)}$. The planner then either:
\begin{enumerate}
    \item Recommends an arm $A_j\in \cA_j$ containing $A_j$.
    \item Does nothing.
\end{enumerate}
In the first case, the agent observes $A_j\in \cA_j$ and chooses a response $A_{-j}$ from a mixed strategy on $\cA_{-j}$ which can depend on $A_j$. The payoff for the planner is zero if the planner did nothing, and otherwise $\theta_{A_j}-\theta_{A_{-j}}$.
\end{definition}

\begin{definition}
    A planner strategy $\pi$ for the $j$-recommendation game is said to be $(j,\lambda)$-padded, for a function $\lambda:\cA_j\to\bbR_{\geq 0}$, if for each $A_j\in \cA_j$ and $A_{-j}\in\cA_{-j}$:
    \begin{equation}
    \label{eq:gain-per-Aj}
    \min_{A_{-j}\in \cA_{-j}}\bbE[(\theta_{A_j}-\theta_{A_{-j}})\cdot 1_{\pi=A_j}]= \lambda_{A_j}.
    \end{equation}
    Such a strategy has total $j$-padding at least 
    \[
    \lambda_j\equiv\sum_{A_j\in\cA_j}\lambda_{A_j}.
    \]
\end{definition}

Note that since $\lambda_{A_j}\geq 0$, a planner strategy need not be $(j,\lambda)$-padded for any $\lambda$, and hence need not have a total padding value.

We can without loss of generality view all $j$-recommendation game strategies as depending only on the posterior means of each arm conditional on $\mathcal G$. In the below, we set $\widetilde\theta_i=\mathbb E[\theta_i|\mathcal G]$ for the relevant static $\sigma$-algebra $\mathcal G$.  Given a planner strategy in the $j$-recommendation game, we naturally obtain a corresponding $(j,\mathcal G)$-informed policy for our original problem in which we recommend arm $A_j$ when the planner would, and recommend the $\mathcal G$-conditional-expectation-maximizing arm otherwise. The key point of this game is as follows.

\begin{lemma}
\label{lem:gameBIC}
If a strategy $\pi$ for the planner in the $j$-recommendation game has minimax value $\lambda_j$, then its total $j$-padding value equals $\lambda_j$.
\end{lemma}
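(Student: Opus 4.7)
The plan is to show that both the minimax value of $\pi$ and its total $j$-padding coincide with a single explicit sum; the content of the lemma is essentially this identification. Both quantities measure the planner's expected gain against a worst-case agent response, broken down by which recommendation $A_j \in \cA_j$ is issued.

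First I would reduce to pure agent strategies. Since the expected payoff is linear in the agent's randomization over $\cA_{-j}$, any mixed strategy is weakly dominated by the best pure strategy in its support; hence the minimum over agent strategies is attained on some deterministic map $\sigma : \cA_j \to \cA_{-j}$. For such a $\sigma$, partitioning according to the recommended arm (and using that the payoff vanishes when the planner does nothing) gives
\[
\bbE[\mathrm{payoff}(\pi,\sigma)] \;=\; \bbE\lt[(\theta_\pi - \theta_{\sigma(\pi)}) \cdot 1_{\pi \in \cA_j}\rt] \;=\; \sum_{A_j \in \cA_j} \bbE\lt[(\theta_{A_j} - \theta_{\sigma(A_j)}) \cdot 1_{\pi = A_j}\rt].
\]

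Next I would observe that the $A_j$-th summand depends on $\sigma$ only through the single value $\sigma(A_j) \in \cA_{-j}$, so the agent may minimize each term independently. This yields the factorization
\[
\min_\sigma \bbE[\mathrm{payoff}(\pi,\sigma)] \;=\; \sum_{A_j \in \cA_j} \min_{A_{-j} \in \cA_{-j}} \bbE\lt[(\theta_{A_j} - \theta_{A_{-j}}) \cdot 1_{\pi = A_j}\rt] \;=\; \sum_{A_j \in \cA_j} \lambda_{A_j},
\]
where the last equality is precisely the defining relation \eqref{eq:gain-per-Aj} for $(j,\lambda)$-padded. The left-hand side equals $\lambda_j$ by the hypothesis that $\pi$ has minimax value $\lambda_j$, and the right-hand side is the total $j$-padding by definition, so the two are equal.

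There is no real obstacle; the lemma is a definition-chase. The only point that merits brief justification is the reduction to pure agent strategies, which is an immediate consequence of linearity of expectation in the agent's random choice.
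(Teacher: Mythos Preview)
Your proposal is correct and follows the same approach as the paper: the paper's proof is the one-liner ``the left-hand side of \eqref{eq:gain-per-Aj} is the contribution of playing $A_j$ to the minimax value of $\pi$,'' and you have simply unpacked this by explicitly reducing to pure agent strategies, partitioning on the recommendation, and minimizing term-by-term. There is nothing materially different between the two arguments.
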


\begin{proof}
Simply note that the left-hand side of \eqref{eq:gain-per-Aj} is the contribution of playing $A_j$ to the minimax value of $\pi$ for the planner.
\end{proof}

\begin{lemma}
\label{lem:bic-reduction}
Suppose there exists a planner strategy $\pi$ in the $j$-recommendation game with total $j$-padding at least $\lambda_j$. Let \msedit{$p\geq d/(d+\lambda_j)$} and let $b_j\sim \Ber(p)$ be independent of the signal. 

Consider a modified game where the planner observes $\cG$ if and only if $b_j=1$ (and where the agent never observes $b_j$). There exists a BIC planner strategy $\wt\pi$ in this game such that if $b_j=0$, then $\wt\pi$ always recommends an action in $\cA_j$.
\end{lemma}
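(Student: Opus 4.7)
The plan is to construct $\wt\pi$ as a two-branch strategy that combines the padded strategy $\pi$ with an information-free sampler, and then to verify BIC by a case analysis that tracks how each $A_j \in \cA_j$ can be produced. On $\{b_j = 1\}$ the planner observes $\cG$ and runs $\pi$: if $\pi$ recommends some $A_j \in \cA_j$, output that $A_j$; otherwise output the greedy action $\arg\max_{A \in \cA}\bbE[\theta_A \mid \cG]$. On $\{b_j = 0\}$, with no access to $\cG$, sample $A_j \in \cA_j$ with probability $q_{A_j} := \lambda_{A_j}/\lambda_j$, independently of everything else. By construction $\wt\pi$ always recommends in $\cA_j$ when $b_j = 0$.

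To establish BIC, I would fix $A_j \in \cA_j$ and $A_{-j} \in \cA_{-j}$ and decompose $\bbE[(\theta_{A_j} - \theta_{A_{-j}})\,1_{\wt\pi = A_j}]$ into three contributions indexed by the branch that produced $A_j$. First, on $\{b_j = 1,\ \pi = A_j\}$, the contribution is at least $p\,\lambda_{A_j}$ by independence of $b_j$ from $(\cG,\btheta)$ together with the padded condition \eqref{eq:gain-per-Aj}. Second, on $\{b_j = 1,\ \pi\text{ silent},\ \text{greedy} = A_j\}$, the contribution is non-negative since greedy selection with respect to $\cG$ yields $\bbE^{\cG}[\theta_{A_j}] \geq \bbE^{\cG}[\theta_{A_{-j}}]$. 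Third, on $\{b_j = 0\}$, the contribution equals $(1-p)\,q_{A_j}\,(\bbE[\theta_{A_j}] - \bbE[\theta_{A_{-j}}])$, which is bounded below by $-(1-p)\,q_{A_j}\,d$ because each atom reward lies in $[0,1]$ and each action contains at most $d$ atoms. Summing and substituting $q_{A_j} = \lambda_{A_j}/\lambda_j$ reduces BIC to $p\lambda_j \geq (1-p)d$, i.e.\ $p \geq d/(d+\lambda_j)$, which is precisely the hypothesis. Recommendations in $\cA_{-j}$ arise only through the greedy fallback on $\{b_j = 1\}$, so BIC for those actions is immediate.

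The main delicate point is the worst-case bound in the third branch: because combinatorial semibandit rewards range over $[0,d]$ rather than $[0,1]$, the loss from blindly recommending in $\cA_j$ when $b_j = 0$ can be as large as $d$, and this is exactly what forces the $d$ in the threshold $p \geq d/(d+\lambda_j)$. The normalization $q_{A_j} \propto \lambda_{A_j}$ is the other key choice: it balances the per-$A_j$ padding gain against the per-$A_j$ worst-case prior loss, so that a single condition on the total padding $\lambda_j$ simultaneously certifies BIC for every $A_j \in \cA_j$. The only corner case is $\lambda_j = 0$, where the hypothesis forces $p = 1$, the default branch has probability zero, and there is nothing to check.
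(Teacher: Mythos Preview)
Your construction and three-branch decomposition are essentially those of the paper, and your computation correctly shows
\[
\bbE\big[(\theta_{A_j}-\theta_{A_{-j}})\,1_{\wt\pi=A_j}\big]\geq p\lambda_{A_j}-(1-p)\,\frac{\lambda_{A_j}}{\lambda_j}\,d\geq 0
\]
for every $A_{-j}\in\cA_{-j}$. The gap is that BIC (Definition~\ref{def:BIC}) requires $\bbE[\theta_{A_j}-\theta_{A'}\mid\wt\pi=A_j]\geq 0$ for \emph{every} $A'\in\cA$, in particular for $A'=A_j'\in\cA_j$ with $A_j'\neq A_j$. Your argument never addresses this case. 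The padded condition \eqref{eq:gain-per-Aj} bounds only $\min_{A_{-j}\in\cA_{-j}}\bbE[(\theta_{A_j}-\theta_{A_{-j}})\,1_{\pi=A_j}]$, so it gives no control over branch~1 when the deviation is to $A_j'\in\cA_j$; and branch~3 contributes $(1-p)q_{A_j}\,\bbE[\theta_{A_j}-\theta_{A_j'}]$, which can be strictly negative (take $A_j'\supsetneq A_j$ with an extra positive-mean atom). So your $\wt\pi$, which outputs $A_j$ verbatim, can fail BIC against such $A_j'$.

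The paper repairs this with one additional step. It first forms an intermediate strategy $\hat\pi$ (your branches~1 and~3, but \emph{without} the greedy fallback: on $\{b_j=1\}$ it follows $\pi$, possibly staying silent; on $\{b_j=0\}$ it samples from $q$). Your computation is exactly the statement that, conditional on $\hat\pi=A_j$, the action $A_j$ beats every $A_{-j}\in\cA_{-j}$. Hence the greedy action $A_j':=\arg\max_{A\in\cA}\bbE[\theta_A\mid\hat\pi=A_j]$ must lie in $\cA_j$. The paper then defines $\wt\pi$ to be exploitation given the signal $\hat\pi$: this is BIC by construction (exploitation on any signal is BIC), and on $\{b_j=0\}$ the signal is always some $A_j\in\cA_j$, so $\wt\pi$ outputs an action in $\cA_j$ as required. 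In short, your inequality is the right one, but it certifies that the greedy response to the signal lies in $\cA_j$, not that recommending $A_j$ itself is BIC.
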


\msedit{
\begin{proof}
The planner follows $\pi$ if $b_j=1$. Conditioned on $b_j=0$, the planner plays from the probability distribution $q$ on $\cA_j$ given by $q(A_j)=\frac{\lambda_{A_j}}{\lambda_j}$.
We call this modified strategy $\hat\pi$.
Noting that $\bbE[\theta_{A_j}-\theta_{A_{-j}}]>-d$, we find that if the agent observes $\hat\pi=A_j$, then $A_j$ has higher conditional mean reward than any $A_{-j}\in\cA_{-j}$:
\begin{align*}
    \bbE[(\theta_{A_j}-\theta_{A_{-j}})\cdot 1_{\hat\pi=A_j}]
    &=
    \bbE[(\theta_{A_j}-\theta_{A_{-j}})\cdot 1_{b_j=1,\pi=A_j}]
    +
    (1-p)q(A_j)\cdot 
    \bbE[\theta_{A_j}-\theta_{A_{-j}}]
    \\
    &>
    p\lambda_{A_j}
    -j
    d(1-p)q(A_j)
    .
\end{align*}
Since $p\geq d/(d+\lambda_j)$, it easily follows that the last expression is non-negative.

Therefore the greedy strategy conditioned on observing $\hat\pi=A_j$ is to choose some $A_j'\in \cA_j$ (any $A_{-j}\in \cA_{-j}$ is inferior to $A_j$ hence suboptimal). The claimed BIC strategy $\wt\pi$ is given by playing $A_j'$ when $\hat\pi=A_j$. Note that by definition, $\wt\pi$ exploits conditioned on the signal from $\hat\pi$, hence is BIC.
\end{proof}
}

Next we upper bound the number of samples required in $\mathcal G$ to ensure $\padding\geq\Omega(\padG)$. The point is that the game has an $N\to\infty$ limit that can be approximated by coupling. To this end, we let $\lambda_{j,\infty}$ be the minimax value for the $N=\infty$ version where the planner observes the exact values of $\theta_1,\dots,\theta_j$. We note that in the simpler setting of \cite{sellke-slivkins}, $\lambda_{j,\infty}=\inf_{q\in\Delta_j} \bbE[(\theta_j-\theta_q)_+]$ has a simple interpretation by using the minimax theorem to choose the agent's strategy first. It is similarly possible to apply the minimax theorem here but the result is not particularly interpretable since the planner chooses a different mixed strategy for each $A_j\in\cA_j$.

\begin{lemma}
\label{lem:chernoff2}
For $N\geq \Omega\lt(\frac{d^2 \log|\cA|}{\lambda_{j,\infty}^2}\rt)$, there is an $N$-informed policy with minimax value at least $\lambda_{j,\infty}/2$.
\end{lemma}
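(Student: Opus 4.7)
The idea is to lift the infinite-sample Nash strategy through a posterior sample, so that the Bayesian identity \emph{a posterior sample has the prior marginal} reduces the analysis to the infinite-sample value plus a small concentration error.

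First I would invoke the minimax theorem to fix an optimal planner strategy $\pi^\star:[0,1]^j\to\cA_j\cup\{\star\}$ for the infinite-sample game, which guarantees value $\ge\lambda_{j,\infty}$ against every agent response $q$. Given $\mathcal G$, draw a posterior sample $\hat\theta_{[j]}=(\hat\theta_i)_{i\le j}$ of $(\theta_i)_{i\le j}$, set $\tilde\theta=(\hat\theta_{[j]},\theta_{>j})$ (the unobserved atoms are unchanged), and play $\pi_N=\pi^\star(\hat\theta_{[j]})$. Since posterior samples inherit the prior marginal and prior-independence of atoms gives $\hat\theta_{[j]}\indep\theta_{>j}$, one has $\tilde\theta\ed\theta$.

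For any agent response $q$, decompose $\theta=\tilde\theta-(\tilde\theta-\theta)$ to obtain
\[
V_N(\pi_N,q)=\bbE\big[(\tilde\theta_{\pi_N}-\tilde\theta_{q(\pi_N)})\mathbf 1_{\pi_N\ne\star}\big]+(\text{error}).
\]
The first term equals the value of $\pi^\star$ against $q$ in the infinite-sample game (using $\tilde\theta\ed\theta$), so it is at least $\lambda_{j,\infty}$. The error satisfies $|\text{error}|\le 2\,\bbE[\max_{A\in\cA}|(\tilde\theta-\theta)_A|]$, and crucially $(\tilde\theta-\theta)_A=\sum_{i\in A,\,i\le j}(\hat\theta_i-\theta_i)$ only involves observed atoms.

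To control this, combine Hoeffding with Lemma~\ref{lem:freqguarantee}: each $\hat\theta_i-\theta_i$ is $O(1/\sqrt N)$-subgaussian under the joint law. Triangle inequality for $\psi_2$-norms gives $\|(\tilde\theta-\theta)_A\|_{\psi_2}=O(d/\sqrt N)$, and a union bound over $\cA$ yields $\bbE[\max_{A\in\cA}|(\tilde\theta-\theta)_A|]=O\big(d\sqrt{\log|\cA|/N}\big)$. For $N\ge Cd^2\log|\cA|/\lambda_{j,\infty}^2$ with $C$ sufficiently large, the error is at most $\lambda_{j,\infty}/4$, hence $V_N(\pi_N,q)\ge\lambda_{j,\infty}/2$ for every $q$, giving $\lambda_{j,N}\ge\lambda_{j,\infty}/2$ as required. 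The key subtlety is using a \emph{posterior sample} rather than the empirical mean $\bar\theta$ inside $\pi^\star$: the posterior sample inherits the prior marginal, making the first term literally equal an infinite-sample game value and sidestepping any continuity requirement on $\pi^\star$ that a direct comparison of $\pi^\star(\bar\theta)$ with $\pi^\star(\theta)$ would force.
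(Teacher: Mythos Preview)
Your proposal is correct and follows essentially the same approach as the paper: apply the infinite-sample optimal strategy $\pi^\star$ to a posterior sample $\hat\theta_{[j]}$, use the fact that $(\hat\theta_{[j]},\theta_{>j})\ed\theta$ to recover the infinite-sample value, and control the discrepancy via subgaussian concentration. The only difference is in how the error is bounded: the paper bounds $\bbE[|\hat\theta_A-\theta_A|\cdot 1_{\pi=A_j}]$ separately for each recommended $A_j$ using Lemma~\ref{lem:subgaussian-tail} and then sums $p_\pi(A_j)\sqrt{\log(1/p_\pi(A_j))}\le\sqrt{\log|\cA|}$ by concavity, whereas you bound the error by $2\,\bbE[\max_{A\in\cA}|(\tilde\theta-\theta)_A|]$ directly via a union bound; both routes give the same $O(d\sqrt{\log|\cA|/N})$ error and hence the same threshold for $N$.
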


\begin{proof}
    Given the signal, first generate a posterior sample $(\hat\theta_1,\dots,\hat\theta_j)$ for $\theta_1,\dots,\theta_j$ and then consider the infinite-sample policy $\pi_{\infty}$. 
    We claim that $\pi_{\infty}$ has minimax value at least $\lambda_{j,\infty}/2$

    To find the minimax value of this policy, we use Lemmas~\ref{lem:freqguarantee} and \ref{lem:subgaussian-tail}. In particular they imply that for each $(A,A_j)$,
    \begin{align*}
    \bbE[|\hat\theta_A-\theta_A|\cdot 1_{\pi=A_j}]
    &\leq
    CdN^{-1/2}
    p_{\pi}(A_j)
    \sqrt{\log(1/p_{\pi}(A_j))}.
    \end{align*}
    Taking $A=A_j$ and $A=A_{-j}$, we find that the minimax value of this policy is at least
    \begin{align*}
    \lambda_{j,\infty} - O(dN^{-1/2})\cdot \sum_{A} p_{\pi}(A)\sqrt{\log(1/p_{\pi}(A))}
    \geq
    \lambda_{j,\infty}
    -
    O(dN^{-1/2}\sqrt{\log|\cA|}).
    \end{align*}
    \msedit{
    The last expression is at least $\lambda_{j,\infty}/2$} for $N\geq \Omega\lt(\frac{d^2 \log|\cA|}{\lambda_{j,\infty}^2}\rt)$ as desired.
\end{proof}

\begin{proof}[Proof of Lemma~\ref{lem:BIC-growth}]
    \msedit{
    By Lemma~\ref{lem:chernoff2}, conditioned on $z=1$ the minimax value of the game is at least $\lambda_{j,\infty}/2$. 
    We now apply Lemma~\ref{lem:bic-reduction}, with $b_j$ the signal $z$. 
    The policy $\wt\pi$ is the desired one, proving the lemma.
    (Note that by definition $\lambda=\ulambda/2d$ in Lemma~\ref{lem:BIC-growth}.)
    }
\end{proof}

\begin{proof}[Proof of Proposition~\ref{prop:semibandit-LB}]
    We consider two agents. The first, an \textbf{obedient agent}, simply obeys recommendations. The second, a \textbf{$j$-avoiding agent}, when recommended to play some $A_j\in \cA_j$, instead plays from $\cA_{-j}$ using the minimax optimal response for the easy $j$-recommendation game. We will show that if $T$ is too small and almost surely some $A_j\in\cA_j$ must be recommended, then the $j$-avoiding agent attains larger expected reward than the obedient one against any planner. This will contradict the BIC property. First note that each time-step, by definition the alternative agent loses at most $\olambda_{j}$ per round compared to an obedient agent. To get the lower bound, we will show that the alternative agent does better the first time $t_j$ that $A_j$ is recommended (i.e. the random value $t_j$ is minimal such that $A_j\in A_{t_j}$). 

    At this time, note that $\theta_{A_j}-\bbE[\theta_{A_{-j}}~|~A_j]$ has standard deviation at least $\sigma$ for each $A_j\in\cA_j$ and any agent strategy. ($A_j$ can depend on $\theta_1,\dots,\theta_{j-1}$ but is independent of $\theta_j$.) In particular, 
    \begin{align*}
    \bbE\lt[\lt(\theta_{A_{t_j}}-\bbE[\theta_{A_{-j}}~|~A_{t_j}]\rt)_+\rt]
    +
    \bbE\lt[\lt(\theta_{A_{t_j}}-\bbE[\theta_{A_{-j}}~|~A_{t_j}]\rt)_-\rt]
    &=
    \bbE\lt[\lt|\theta_{A_{t_j}}-\bbE[\theta_{A_{-j}}~|~A_{t_j}]\rt|\rt]
    \\
    &\geq
    \frac{1}{d}\bbE\lt[\lt(\theta_{A_{t_j}}-\bbE[\theta_{A_{-j}}~|~A_{t_j}]\rt)^2\rt]
    \\
    &\geq \sigma^2/d. 
    \end{align*}
    The first term is at most $\olambda_j$ for any strategy, so
    \begin{align*}
    \bbE\lt[\theta_{A_{t_j}}-\bbE[\theta_{A_{-j}}~|~A_{t_j}]\rt]
    &
    =\bbE\lt[\lt(\theta_{A_{t_j}}-\bbE[\theta_{A_{-j}}~|~A_{t_j}]\rt)_+\rt]
    -
    \bbE\lt[\lt(\theta_{A_{t_j}}-\bbE[\theta_{A_{-j}}~|~A_{t_j}]\rt)_-\rt]
    \\
    &
    \leq 2\olambda_j-\frac{\sigma^2}{d}.
    \end{align*}
    Thus in total, the $j$-avoiding agent outperforms the obedient one by
    \[
    \frac{\sigma^2}{d}-(T+2)\olambda_j,
    \]
    which is non-positive if recommendations are BIC.
    This implies the claimed estimate.
\end{proof}

\end{document}